 \def\AMSTeXfeatures{\Plainheads 
   \let\current@vert=\AMS@vert}
 \def\Plainheads{\sh@ftdiam=0.05em
   \getlabeldims
   \let\vshaftfill=\plnvsolidfill
   \let\hshaftfill=\plnhsolidfill
   \let\th@rhead=\plnrhead
   \let\th@lhead=\plnlhead
   \let\th@dnhead=\plndnhead
   \let\th@uphead=\plnuphead}
 \def\glet{\global\let}
 \def\LaTeXfeatures{\catcode`\@=11
   \ifx\@clnwd\undefined \nol@g
      \input ltxcode.tex \dol@g \fi
   \ltxheads \let\current@vert=\new@vert
   \providelto \catcode`\@=\active}
 \def\nol@g{\def\wlog{\edef\garbage}}
 \def\dol@g{\let\wlog=\wl@g} \let\wl@g=\wlog
 \newbox\ltobox
 \def\providelto{{\setbox\z@=
   \hbox{$\to$}\minharrlen=\wd\z@
   \global\setbox\ltobox=\hbox{$\activeat>>>$}}
   \def\lto{\mathrel{\copy\ltobox}}}
 \def\ltxheads{\sh@ftdiam=\@wholewidth
   \getlabeldims
   \let\vshaftfill= \ltxvsolidfill
   \let\hshaftfill=\ltxhsolidfill
   \let\th@rhead=\ltxrhead
   \let\th@lhead=\ltxlhead
   \let\th@dnhead=\ltxdnhead
   \let\th@uphead=\ltxuphead}
 {\catcode`\@=\active
   \gdef@#1{\csname #1\string@at\endcsname}
   \glet\activeat=@}
 \def\def@#1{\expandafter\def\csname #1@at\endcsname}
\varrow\vequalfill{}{}\fi}
  \def\AMS@vert{\varrow\vequalfill{}{}}
  \def\new@vert#1|#2|{\ifodd\row
   \let\nextarrow\vertexvarrow
   \else\let\nextarrow\varrow\fi
   \nextarrow\vshaftfill{#1}{#2}}
\let\next\AMSatdash \fi \next}
  \def\hl@ne#1-#2-{\harrow\hshaftfill{#1}{#2}}
  \def\AMSatdash{\let\next\relax\leavevmode
    \def\next@{\ifx\next-%
      \def\next-{\futurelet\next\nextii@}%
     \else\def\next{\hbox{-}}\fi\next}%
    \def\nextii@{\ifx\next-\def\next-{\hbox{---}}%
      \else\def\next{\hbox{--}}\fi\next}%
    \futurelet\next\next@}
 \def\fiberbox{\hbox{$\vcenter{\hr@le\hbox{\vr@le
   \kern1ex\vbox{\kern1.2ex}\vr@le}\hr@le}$}}
  \def\hr@le{\hrule height \sh@ftdiam}
  \def\vr@le{\vrule width \sh@ftdiam}
   \else \varrow{#1}{#2}{#3}\fi}
 \def\Dnarrfill{\vequalfill\Dnhe@d}
 \def\Uparrfill{\Uphe@d\vequalfill}
 \def\ontofill{\rtarrfill\kern-0.3em 
   \th@rhead\kern 0.3em} 
 \def\rtarrfill{\hshaftfill\th@rhead}
 \def\ltarrfill{\th@lhead\hshaftfill}
 \def\dnarrfill{\vshaftfill\th@dnhead}
 \def\uparrfill{\th@uphead\vshaftfill}
 \def\hequalfill{\plnhfill=}
 \def\deffill{:\plnhfill=}
 \def\plnvextfill#1{\setbox\z@
   \hbox{\the\textfont3 #1}%
   \dimen@=\dp\z@\advance\dimen@\ht\z@
   \copy\z@ \kern-\dimen@ 
   \cleaders\copy\z@ \vfill
   \kern-\dimen@ 
   \box\z@}
 \def\plnhfill#1{$\m@th\mkern-1.5mu\mathord#1\mkern-6mu
    \cleaders\hbox{$\mkern-2mu\mathord#1\mkern-2mu$}\hfill
    \mkern-6mu\mathord#1\mkern-1.5mu$}
 \def\vequalfill{\plnvextfill{\char'167}}
 \def\plnvsolidfill{\plnvextfill{\char'077}}
 \def\plnhsolidfill{\plnhfill-}
 \def\ltxhsolidfill{\leaders\hrule height\topofshaft depth\botofshaft
   \hfill}
 \def\ltxvsolidfill{\leaders\vrule width\sh@ftdiam\vfill}
 \def\hdashfill{\hd@sh\wd@sh
   \xleaders \hbox{\wd@sh\hd@sh\wd@sh}\hfill
   \wd@sh\hd@sh}
 \def\vdashfill{\vd@sh\wd@sh
   \xleaders \vbox{\wd@sh\vd@sh\wd@sh}\vfill
   \wd@sh\vd@sh}
 \def\dashed{\ifinmeasureCD\else
    \ifodd\row\option{\let\hshaftfill=\hdashfill}%
   \else\option{\let\vshaftfill=\vdashfill}\fi\fi}
 \newdimen\CDstrutht  \newdimen\CDstrutdp
 \newdimen\CDstrutlen \CDstrutlen=\CDstrutht
 \def\CDstrut{\vrule
   height \ifnum\row=1 \z@\else\CDstrutht \fi
   depth \ifnum\row=\numrows \z@ \else\CDstrutdp \fi
   width\z@}
 \newdimen\CDarrsurr \CDarrsurr=0.375em
 \newdimen\CDdashlen
 \newdimen\CDvarrlen \CDvarrlen=1.5\baselineskip
 \newdimen\minharrlen 
\z@\hbox{$\longrightarrow$} \minharrlen=\wd\z@
 \newdimen\minCDharrlen \minCDharrlen=2.5em 
\newdimen \minc@lwd
\def\findminc@lwd{\minc@lwd=2\CDarrsurr
  \advance\minc@lwd\minCDharrlen}
 \newdimen\sh@ftdiam
 \newdimen\labelsurr \labelsurr=1.25 em
\newdimen\@rrwd \newdimen\@rrdp
 \def\adjustbot#1{\option{\advance\@rrdp#1\relax}}
\def\pushvertex#1{\global\p@shlen#1\relax
   \global\let\maybepush=\dopush}
 \newdimen\p@shlen \p@shlen=\z@
 \let\maybepush=\relax
 \def\dopush{\ifinmeasureCD 
   \advance\locdimen by -\p@shlen 
   \else\advance \@rrwd by -\p@shlen \fi 
   \global\let\maybepush=\relax \global\p@shlen=\z@\relax}
 \def\span@ne{\global\sp@ncnt=\@ne\relax}
 \def\setsp@n#1#2{\global\sp@ncnt=#1\relax
   \ifx\relax#2\relax\else\global\sp@ncnt@=#2\relax\fi}
 \def\plnrhead{\llap{$\rightarrow\mkern-1.5mu$}}
 \def\plnlhead{\rlap{$\mkern-1.5mu\leftarrow$}}
 \def\clap#1{\hbox to \z@{\hss #1\hss}}
 \def\plndnhead{\hbox{\the\textfont3 \char'171}}
 \def\plnuphead{\hbox{\the\textfont3 \char'170}}
 \def\Dnhe@d{\hbox{\the\textfont3 \char'177}}
 \def\Uphe@d{\hbox{\the\textfont3 \char'176}}
 \def\ltxrhead{\raise\@xisheight
   \llap{\smash{\@linefnt\@getrarrow(1,0)}}}
 \def\ltxlhead{\raise\@xisheight
   \rlap{\@linefnt\@getlarrow(-1,0)}}
 \def\ltxuphead{\setbox\z@=\rlap{%
   \kern\@halfwidth\@linefnt\char'66}%
   \copy\z@\kern-\ht\z@}
 \def\ltxdnhead{\setbox\z@=\rlap{%
   \kern\@halfwidth\@linefnt\char'77}%
   \ht\z@=\z@\box\z@}
 \def\wd@sh{\kern0.5\CDdashlen}
 \def\hd@sh{\vrule height\topofshaft depth\botofshaft
    width\CDdashlen}
 \def\vd@sh{\hrule height\CDdashlen
   depth\z@ width\sh@ftdiam}
\def\xylist{14{3434}13{2414}12{1723}%
  23{1413}34{1153}11{0867}43{0707}%
  32{0580}21{0414}31{0291}41{0}}
\def\find@xyargs{\dimen@=\@rrdp
  \advance\dimen@ by \CDstrutlen
  \tgtcnt@=\dimen@ \dimen@=\@rrwd 
  \divide\dimen@ by \@m 
  \divide \tgtcnt@ by \dimen@ 
  \expandafter\testxy\xylist\relax
  \unitlength=\@xarg\@rrdp
  \divide\unitlength by\@yarg\relax}
\def\testxy#1#2#3{\ifnum\tgtcnt@>#3
    \@xarg=#1\relax \@yarg=#2\relax
    \let\next=\ignorerest
  \else\let\next\testxy\fi\next}
\def\ignorerest#1\relax{\relax}
\let\scalefactor=\@ne
\def\SWarrow{\find@xyargs\vector
  (-\@xarg,-\@yarg)\scalefactor\hskip-\wd\@linechar}
\def\NWarrow{\find@xyargs\vector
  (-\@xarg,\@yarg)\scalefactor\hskip-\wd\@linechar}
\def\NEarrow{\find@xyargs\vector
  (\@xarg,\@yarg)\scalefactor}
\def\SEarrow{\find@xyargs\vector
  (\@xarg,-\@yarg)\scalefactor}
\def\rightupline{\find@xyargs\@linelen=\scalefactor
     \unitlength\@sline}
\def\rightdownline{\find@xyargs\@yarg=-\@yarg\relax
     \@linelen=\scalefactor\unitlength\@sline}
\def\Sim{\ifodd\row\setbox\z@=\hbox{$\sim$}\dimen@=\ht\z@
 \advance\dimen@ by -\@xisheight
  \vbox{\box\z@\kern-\@xisheight\kern\dimen@}%
  \else\hbox{$\wr$}\fi}
\def\harrow#1#2#3{\inmeasureCDtrue\findminarrwd
  {#2}{#3}{\sp@ncnt\minharrlen}\inmeasureCDfalse\span@ne
  \mathrel{\hbox{\options\hplace{#1}\ulabel{#2}\dlabel{#3}}}}
\def\noharrow{\harrow\hfill{}{}}
\def\vertexvarrow#1#2#3{\findarrdp \@rrwd=\z@ \setsp@n\@ne\@ne
  \vbox to \z@{\kern-1.2\CDstrutht
  \rlap{\options\vplace{#1}\llabel{#2}\rlabel{#3}}\vss}}
\newif\ifinmeasureCD
\def\measurelabel#1{\setbox\z@
  \hbox{$\scriptstyle#1\kern\labelsurr$}%
  \ifdim\wd\z@>\@rrwd \@rrwd=\wd\z@\fi}
\def\findminarrwd#1#2#3{\@rrwd=#3\relax
   \measurelabel{#1}\measurelabel{#2}}
\def\findCDarrwd#1#2{\@rrwd=\minCDharrlen
   \measurelabel{#1}\measurelabel{#2}%
  }
\newdimen\vrtxhalfwd  \newbox\tempbox
\def\DANABUG{\advance\col by \@ne
 \@rrwd=\minCDharrlen
  \advance\@rrwd by \vrtxhalfwd
  \advance\@rrwd by \CDarrsurr
  \ifnum\col>\numcols \numcols=\col
     \newlocdimen{col\the\col}\locdimen=\@rrwd 
  \else \ifdim\@rrwd>\c@l \c@l=\@rrwd\fi\fi}
\def\drop#1\\{
  \findvrtxhalfsum\DANABUG\advance\row by 2 \measureinit}
\def\measureinit{\col=\@ne \vrtxhalfwd=-\CDarrsurr\arrspan=\@ne\@rrwd=\z@
   \setbox\tempbox=\hbox\bgroup$}
\def\measure{
  \let\harrow\measureCDarrow
  \let\CDCR=\measureCR 
   \findminc@lwd 
  \inmeasureCDtrue
  \row=\@ne \numcols=\z@ \measureinit}
\def\endmeasure{\findvrtxhalfsum\DANABUG
  \numrows=\row 
  \inmeasureCDfalse}
\def\newlocdimen#1{\advance\dimenc@unt by \@ne
  \ifnum\dimenc@unt<\insc@unt
     \else\errmessage{No room for the CD}\fi
  \dimendef\locdimen=\dimenc@unt
  \expandafter\dimendef\csname#1\endcsname=\dimenc@unt}
 \def\r@wc@l{\csname row\the\row col\the\col\endcsname}
 \def\c@l{\csname col\the\col\endcsname}
 \def\findvrtxhalfsum{$\egroup
  \newlocdimen{row\the\row col\the\col}
  \locdimen=\vrtxhalfwd 
  \vrtxhalfwd=0.5\wd\tempbox 
  \advance\vrtxhalfwd by \CDarrsurr
  \advance\locdimen by \vrtxhalfwd 
  \advance\@rrwd by \locdimen 
  \maybepush
  \divide\@rrwd by \arrspan\relax
  \ifdim\@rrwd<\minc@lwd
    \ifnum\col>\@ne \@rrwd=\minc@lwd\fi \fi
  \loop 
    \ifnum\col>\numcols \numcols=\col
       \newlocdimen{col\the\col}
       \locdimen=\@rrwd 
    \else \ifdim\@rrwd>\c@l \c@l=\@rrwd\fi \fi
   \ifnum\arrspan>\@ne
      \advance\arrspan by -1 \advance\col by \@ne
  \repeat }
 \def\measureCDarrow#1#2#3{\findvrtxhalfsum
   \arrspan=\sp@ncnt\relax\global\sp@ncnt=1\relax
   \advance\col by \@ne
   \findCDarrwd{#2}{#3}%
   \setbox\tempbox=\hbox\bgroup$}
 \def\locate#1:#2{\ifinmeasureCD\else
   \count@=-#1
   \multiply\count@ by 2
   \advance\count@ by #2
   \dimen@=\count@\@rrwd
   \ifnum\dr@tn=\@ne\relax \else\dimen@=-\dimen@ \fi
   \dimen@i=\@rrdp
   \ifnum\dr@tn>\z@\advance\dimen@i by \CDstrutlen \fi
   \dimen@i=\count@\dimen@i
   \count@=#2 \multiply\count@ by 2
   \divide\dimen@ by \count@
   \divide\dimen@i by \count@
   \lift\dimen@i\nudge\dimen@\fi}
\def\betweenCDrows{\advance\row by \@ne \col=\@ne
\options}
\def\hbegin{\hbox\bgroup\kern\c@l \kern-\r@wc@l$}
\def\hend{$\glet\maybepush\relax \CDstrut\egroup}
\def\vbegin{\setbox\tempbox=\hbox\bgroup$}
\def\vend{$\egroup\ht\tempbox=\z@\dp\tempbox\CDvarrlen
  \box\tempbox}
\def\setCD{\let\harrow=\setCDarrow
  \let\CDCR=\setCR 
  \row=\@ne \col=\@ne \hbegin}
\let\endsetCD=\hend 
\def\findarrwd{\@rrwd=\z@ \count@=\col \advance\count@ by\sp@ncnt
  \loop\ifnum\count@>\col \advance\count@ by -1
      \advance\@rrwd by\csname col\the\count@\endcsname\repeat}
\def\setCDarrow#1#2#3{\kern\CDarrsurr\advance\col by \@ne
  \findarrwd \advance\@rrwd by -\r@wc@l  
  \@rrdp=\z@ 
  \maybepush
  \advance\col by -\@ne \advance\col by \sp@ncnt \span@ne
  \hbox to \@rrwd{\options
   \@rrwd=\scalefactor\@rrwd\hss
   \hplace{#1}\ulabel{#2}\dlabel{#3}\hss}%
   \kern\CDarrsurr}
\newdimen\labspacei 
\newdimen\labspaceii 
\newdimen\@xisheight
\newdimen\labelskip
\newdimen\topofshaft
\newdimen\botofshaft
\newdimen\botofulabel
\newdimen\topofdlabel
\def\getlabeldims{
  \topofshaft=0.5\sh@ftdiam
  \botofshaft=\topofshaft
  \advance\topofshaft by \@xisheight  
  \advance\botofshaft by -\@xisheight  
  \botofulabel=\topofshaft
  \advance\botofulabel by \labelskip
  \topofdlabel=\botofshaft
  \advance\topofdlabel by \labelskip}
\def\ulabel{\ifnum\row=\@ne\let\next\ulabeli
   \else\let\next\ulabellap\fi\next}
\def\ulabeli#1{\vbox{
  \clap{\kern-\@rrwd$\scriptstyle#1$}%
  \kern\botofulabel}\maybeoffset}
\def\ulabellap#1{\vbox to \z@{\vss
  \clap{\kern-\@rrwd$\scriptstyle#1$}%
  \kern\botofulabel}\maybeoffset}
\def\dlabel{\ifnum\row=\numrows\let\next\dlabeli
   \else\let\next\dlabellap\fi\next}
\def\dlabeli#1{\vtop{\kern\topofdlabel
  \clap{\kern-\@rrwd$\scriptstyle#1$}%
  }\maybeoffset}
\def\dlabellap#1{\vbox to \z@{\kern\topofdlabel
  \clap{\kern-\@rrwd$\scriptstyle#1$}%
  \vss}\maybeoffset}
\def\rlabel#1{\vbox to \z@{\vss
  \rlap{\kern\labelskip$\scriptstyle#1$}%
  \vss\kern-\@rrdp}\maybeoffset}
\def\llabel#1{\vbox to \z@{\vss
  \llap{$\scriptstyle#1$\kern\labelskip}%
  \vss\kern-\@rrdp}\maybeoffset}
\def\swlabel#1{\vtop{\kern0.5\@rrdp
  \llap{$\scriptstyle#1$\kern\labelskip\kern-0.5\@rrwd}
  }\maybeoffset}
\def\nwlabel#1{\vbox{
  \llap{$\scriptstyle#1$\kern\labelskip\kern-0.5\@rrwd}%
  \kern-0.5\@rrdp}\maybeoffset}
\def\selabel#1{\vtop{\kern0.5\@rrdp
  \rlap{\kern0.5\@rrwd\kern\labelskip$\scriptstyle#1$}%
  }\maybeoffset}
\def\nelabel#1{\vbox{
  \rlap{\kern0.5\@rrwd\kern\labelskip$\scriptstyle#1$}%
  \kern-0.5\@rrdp}\maybeoffset}
\def\cplace#1{\vbox to \z@{\vss
  \clap{$#1$\kern-\@rrwd}%
  \kern-\@rrdp\vss}\maybeoffset}
\def\hplace#1{\hbox to \@rrwd{#1}\maybeoffset}
\def\vplace#1{\clap{\vbox to \z@{#1\kern-\@rrdp}}\maybeoffset}
\newdimen\nudgeamount \nudgeamount=\z@
\newdimen\liftamount \liftamount=\z@
\let\maybeoffset\relax
\newbox\offsetbox \newdimen\lastheight
\def\dooffset{
  \setbox\offsetbox=\lastbox \lastheight=\ht\offsetbox 
  \setbox\offsetbox=\vbox{\kern-\liftamount\box\offsetbox}%
  \ht\offsetbox=\lastheight
  \kern\nudgeamount\box\offsetbox\kern-\nudgeamount
  \global\nudgeamount=\z@ \global\liftamount=\z@
  \glet\maybeoffset=\relax}
\def\nudge#1{\ifinmeasureCD\else
  \global\advance\nudgeamount#1\relax
  \global\let\maybeoffset\dooffset\fi}
\def\lift#1{\ifinmeasureCD\else
  \global\advance\liftamount#1\relax
  \global\let\maybeoffset\dooffset\fi}
\def\findarrdp{\@rrdp=\CDvarrlen
  \ifnum\sp@ncnt@>1
    \advance\@rrdp by \CDstrutlen
    \multiply\@rrdp by \sp@ncnt@
    \advance\@rrdp by -\CDstrutlen \fi
 }
\def\varrow#1#2#3{\ifnum\sp@ncnt>\@ne 
     \sp@ncnt@=\sp@ncnt\relax\fi
  \findarrdp \@rrwd=\z@ 
  \kern\c@l
   \hbox to \z@{\options
   \@rrdp=\scalefactor\@rrdp
    \hss\vplace{#1}\llabel{#2}\rlabel{#3}\hss}%
  \global\advance\col by \@ne \setsp@n\@ne\@ne
  }
\def\novarrow{\varrow\vfill{}{}}
\def\tweenarrows#1{\findarrwd \findarrdp \setsp@n\@ne\@ne
  \rlap{\options\cplace{#1}}}
\def\usarrow #1#2#3{\dr@tn=\@ne
  \findarrwd \findarrdp \setsp@n\@ne\@ne 
  \rlap{\options\cplace{#1}\nwlabel{#2}\selabel{#3}}%
  \dr@tn=\z@}
\def\dsarrow #1#2#3{\dr@tn=\tw@
  \findarrwd \findarrdp \setsp@n\@ne\@ne 
  \rlap{\options\cplace{#1}\swlabel{#2}\nelabel{#3}}%
  \dr@tn=\z@}
 \def\@rrow#1{\csname #1@rrow\endcsname}
 \def\R@rrow{\harrow \rtarrfill}
 \def\L@rrow{\harrow \ltarrfill}
 \def\V@rrow{\varrow \dnarrfill}
 \def\A@rrow{\varrow \uparrfill}
 \def\SE@rrow{\dsarrow \SEarrow}
 \def\NW@rrow{\dsarrow \NWarrow}
 \def\SW@rrow{\usarrow \SWarrow}
 \def\NE@rrow{\usarrow \NEarrow}
 \def\DS@rrow{\dsarrow \dnslope}
 \def\US@rrow{\usarrow \upslope}
 \def\upslope{\find@xyargs
       \@linelen=\unitlength\@sline}
 \def\dnslope{\find@xyargs\@yarg=-\@yarg\relax
       \@linelen=\unitlength\@sline}
\newtoks\optionlist 
\let\options\relax
\def\dooptions{\the\optionlist\global\optionlist={}%
  \glet\options=\relax}
\def\option#1{\ifinmeasureCD\else
  \glet\options=\dooptions
  \global\optionlist=\expandafter{\the\optionlist\relax#1}\fi}
\def\wider#1{\ifinmeasureCD\else
   \option{\advance\@rrwd by #1}\fi}
\def\deeper#1{\ifinmeasureCD\else
   \option{\advance\@rrdp by #1}\fi}
\def\\{\global\let\sptoken= }\\ }
\def\CR{\futurelet\nexttok\testCR}
\def\testCR{\ifx\nexttok\sptoken
   \let\next\eatspaceCR\else\let\next\CDCR\fi\next}
\def\eatspaceCR#1 {\CR}
\def\measureCR{\ifx\nexttok\endmeasure\let\nextCR\relax
    \else\let\nextCR\drop\fi\nextCR}
\def\setCR{\ifodd\row
  \ifx\nexttok\endsetCD\else\hend\betweenCDrows\vbegin\fi
  \else\vend\betweenCDrows\hbegin\fi}
\def\CD#1\endCD{
   \begingroup\let\\=\CR
  \m@th\offinterlineskip
   \measure#1\endmeasure\null\,\vcenter{\setCD#1\endsetCD}\,
   \endgroup
    }
 \font\@linefnt=line10 
 \newdimen\@tempdima
 \newdimen\@tempdimb
 \newdimen\@wholewidth
 \newdimen\@halfwidth
 \newdimen\unitlength
 \newbox\@linechar
 \newdimen\@linelen
 \newdimen\@clnwd
 \newdimen\@clnht
 \newif\if@negarg
 \def\@whilenoop#1{}
 \def\@whiledim#1\do #2{\ifdim #1\relax#2\@iwhiledim{#1\relax#2}\fi}
 \def\@iwhiledim#1{\ifdim #1\let\@nextwhile=\@iwhiledim 
         \else\let\@nextwhile=\@whilenoop\fi\@nextwhile{#1}}
 \def\@sline{\ifnum\@xarg< 0 \@negargtrue \@xarg -\@xarg \@yyarg -\@yarg
   \else \@negargfalse \@yyarg \@yarg \fi
 \ifnum \@yyarg >0 \@tempcnta\@yyarg \else \@tempcnta -\@yyarg \fi
 \ifnum\@tempcnta>6 \@badlinearg\@tempcnta0 \fi
 \ifnum\@xarg>6 \@badlinearg\@xarg 1 \fi
 \setbox\@linechar\hbox{\@linefnt\@getlinechar(\@xarg,\@yyarg)}%
 \ifnum \@yarg >0 \let\@upordown\raise \@clnht\z@
    \else\let\@upordown\lower \@clnht \ht\@linechar\fi
 \@clnwd=\wd\@linechar
 \if@negarg \hskip -\wd\@linechar \def\@tempa{\hskip -2\wd\@linechar}\else
      \let\@tempa\relax \fi
 \@whiledim \@clnwd <\@linelen \do
   {\@upordown\@clnht\copy\@linechar
    \@tempa
    \advance\@clnht \ht\@linechar
    \advance\@clnwd \wd\@linechar}%
 \advance\@clnht -\ht\@linechar
 \advance\@clnwd -\wd\@linechar
 \@tempdima\@linelen\advance\@tempdima -\@clnwd
 \@tempdimb\@tempdima\advance\@tempdimb -\wd\@linechar
 \if@negarg \hskip -\@tempdimb \else \hskip \@tempdimb \fi
 \multiply\@tempdima \@m
 \@tempcnta \@tempdima \@tempdima \wd\@linechar \divide\@tempcnta \@tempdima
 \@tempdima \ht\@linechar \multiply\@tempdima \@tempcnta
 \divide\@tempdima \@m
 \advance\@clnht \@tempdima
 \ifdim \@linelen <\wd\@linechar
    \hskip \wd\@linechar
   \else\@upordown\@clnht\copy\@linechar\fi}
 \def\@getlinechar(#1,#2){\@tempcnta#1\relax\multiply\@tempcnta 8
 \advance\@tempcnta -9 \ifnum #2>0 \advance\@tempcnta #2\relax\else
 \advance\@tempcnta -#2\relax\advance\@tempcnta 64 \fi
 \char\@tempcnta}
 \def\vector(#1,#2)#3{\@xarg #1\relax \@yarg #2\relax
 \@tempcnta \ifnum\@xarg<0 -\@xarg\else\@xarg\fi
 \ifnum\@tempcnta<5\relax
 \@linelen=#3\unitlength
 \ifnum\@xarg =0 \@vvector 
   \else \ifnum\@yarg =0 \@hvector \else \@svector\fi
 \fi
 \else\@badlinearg\fi}
 \def\@svector{\@sline
 \@tempcnta\@yarg \ifnum\@tempcnta <0 \@tempcnta=-\@tempcnta\fi
 \ifnum\@tempcnta <5
   \hskip -\wd\@linechar
   \@upordown\@clnht \hbox{\@linefnt  \if@negarg 
   \@getlarrow(\@xarg,\@yyarg) \else \@getrarrow(\@xarg,\@yyarg) \fi}%
 \else\@badlinearg\fi}
 \def\@getlarrow(#1,#2){\ifnum #2 =\z@ \@tempcnta='33\else
 \@tempcnta=#1\relax\multiply\@tempcnta \sixt@@n \advance\@tempcnta
 -9 \@tempcntb=#2\relax\multiply\@tempcntb \tw@
 \ifnum \@tempcntb >0 \advance\@tempcnta \@tempcntb\relax
 \else\advance\@tempcnta -\@tempcntb\advance\@tempcnta 64
 \fi\fi\char\@tempcnta}
 \def\@getrarrow(#1,#2){\@tempcntb=#2\relax
 \ifnum\@tempcntb < 0 \@tempcntb=-\@tempcntb\relax\fi
 \ifcase \@tempcntb\relax \@tempcnta='55 \or 
 \ifnum #1<3 \@tempcnta=#1\relax\multiply\@tempcnta
 24 \advance\@tempcnta -6 \else \ifnum #1=3 \@tempcnta=49
 \else\@tempcnta=58 \fi\fi\or 
 \ifnum #1<3 \@tempcnta=#1\relax\multiply\@tempcnta
 24 \advance\@tempcnta -3 \else \@tempcnta=51\fi\or 
 \@tempcnta=#1\relax\multiply\@tempcnta
 \sixt@@n \advance\@tempcnta -\tw@ \else
 \@tempcnta=#1\relax\multiply\@tempcnta
 \sixt@@n \advance\@tempcnta 7 \fi\ifnum #2<0 \advance\@tempcnta 64 \fi
 \char\@tempcnta}
\newtheorem{theorem}{Theorem}[section]
\newtheorem{prop}[theorem]{Proposition}
\newtheorem{remark}[theorem]{Remark}
\newtheorem{definition}[theorem]{Definition}
\newtheorem{problem}[theorem]{Problem}
\newtheorem{defin}[theorem]{Definition}
\def\wt{\widetilde}
\def\al{Val}
\begin{document}

\large{

\title{Multi-sorted logic, models and logical geometry}

\maketitle

\begin{center}

\author{

 E.~Aladova$^{a,b}$,
      A.~Gvaramia$^{c}$,
 B.~Plotkin$^{d}$,
  T.~Plotkin$^{a}$}

\smallskip
 {\small
               $^{a}$ Bar Ilan University,

          5290002, Ramat Gan, Israel

               {\it E-mail address:} aladovael At mail.ru
            }

\smallskip
 {\small
               $^{b}$ Penza State University,

          440026, Krasnaya st. 40, Penza, Russia
}


  \smallskip
        {\small
               $^{c}$ Abkhazian State University,

           384904, Universitetskaya st. 1, Sukhumi, Abkhazia
        }

 \smallskip
        {\small

        $^{d}$ Hebrew University of Jerusalem,

          91904, Jerusalem, Israel

              {\it E-mail address:} plotkin At macs.biu.ac.il
        }

\end{center}

\begin{abstract}
Let $\Theta$ be a variety of algebras,  $(H, \Psi, f)$ be a model,
where $H$ is an algebra from $\Theta$, $\Psi$ is a set of relation
symbols $\varphi$,  $f$ is an interpretation of all  $\varphi$ in
$H$. Let $X^0$ be an infinite set of variables, $\Gamma$ be a
collection of all finite subsets in $X^0$ (collection of sorts),
$\widetilde\Phi$ be the multi-sorted algebra of formulas. These data
define a knowledge base $KB(H,\Psi, f)$. In the paper the notion
of isomorphism of knowledge bases is considered. We give
sufficient conditions which provide isomorphism of knowledge
bases. We also  study the problem of necessary
and sufficient conditions for isomorphism of two knowledge bases.
\end{abstract}

\section{Introduction}\label{S_Int}

Speaking about  knowledge we proceed from its representation in
three components.

(1) \emph{Description of  knowledge} presents a syntactical
component of  knowledge. From algebraic viewpoint description of
 knowledge is a set of formulas $T$ in the algebra of formulas
$\Phi(X)$, $X=\{x_1, \ldots , x_n\}$. Now we only note that
$\Phi(X)$ is one of domains of multi-sorted algebra
$\widetilde\Phi$ (detailed definition of $\widetilde\Phi$ see in
\cite{Seven}, \cite{Plotkin_AG}, \cite{PAP} and
Section~\ref{sec:mult}).

(2) \emph{Subject area of  knowledge} is presented by a model
$(H,\Psi,f)$, where $H$ is an algebra in fixed variety of algebras
$\Theta$, $\Psi$ is a set of relation symbols $\varphi$ and  $f$
is an interpretation of each $\varphi$ in $H$.

(3) \emph{Content of  knowledge} is a subset in $H^{n}$, where
$H^{n}$  is the Cartesian power of $H$. Each content of  knowledge
$A$ corresponds to the description of  knowledge
$T\subset\Phi(X)$, $|X|=n$. If we regard $H^{n}$ as an affine
space then  this correspondence can be treated geometrically (see
Section~\ref{sec:Galois}).

In order to describe the dynamic nature of  a knowledge base we introduce  two
categories: the category  of descriptions of
knowledge $F_{\Theta}(f)$ and the category  of knowledge contents
$LG_\Theta(f)$. These categories are defined using the machinery
of logical geometry (see Sections~\ref{sec:vc}, \ref{sec:svjaz},
or \cite{PP_LNCS}).

We shall emphasize that all of our notions are oriented towards an
arbitrary variety of algebras $\Theta$. Therefore, algebra, logic and
geometry of knowledge bases are related to this variety. Universal
algebraic geometry and logical geometry deal with  algebras $H$
from $\Theta$, while  logical geometry studies also arbitrary models $(H,
\Psi, f)$. Moreover, for each particular variety of algebras
$\Theta$ there are its own interesting problems and solutions.

The objective of the present paper is to study connections between
isomorphisms of knowledge bases and isotypeness of subject areas of
knowledge.

Varying  $\Theta$, we arrive to numerous specific problems. In
particular, if $\Theta$ is a variety of all quasigroups, it is
interesting to understand the connection between logical isotypeness  and
isotopy of quasigroups \cite{Gv}, \cite{Sm}.

The paper consists of two parts. In the first one the necessary
notions from logical geometry are introduced. In the second part,
logical geometry is considered in the context of knowledge bases.
In particular,  we describe conditions on the models which provide
an isomorphism of  corresponding knowledge bases.

\section{Basic notions}\label{sec:first}

\subsection{Points and affine spaces}

Let an algebra $H \in \Theta$ and a set $X = \{x_1, \ldots, x_n\}$
be given. A point $\overline a =(a_1, \ldots, a_n)$ can be
represented as the map $\mu: X \to H$ such that
$a_i=\mu(x_i)$. Denote by $H^n$ the affine space consisting of
such points.

Every map $\mu$ gives rise to the homomorphism  $\mu:
W(X) \to H$, where $W(X)$ is the free algebra over a set $X$ in the
variety $\Theta$. Thus, every affine space can be considered as
the set $Hom(W(X),H)$ of all homomorphisms from $W(X)$ to  $H$.

Each point $\mu$  as a homomorphism has a kernel $Ker(\mu)$, which
is a binary relation on the set $W(X)$. By definition, elements
$w,w'\in W(X)$ belong to the binary relation $Ker(\mu)$ if and
only if $w^\mu = {(w')}^\mu$, where $w^\mu$ is notation for
$\mu(w)$.

We will also consider a logical kernel $LKer(\mu)$ of a point
$\mu$. A formula $u \in \Phi(X)$ belongs to $LKer(\mu)$, if the
point $\mu$ satisfies the formula $u$.

\subsection{Extended boolean algebras}\label{sec:eb}

We start from the definition of  an existential quantifier on a
boolean algebra. Let $B$ be a boolean algebra. Existential
quantifier on $B$ is a unary operation $\exists : B \to B$ such
that the following conditions hold:
\begin{enumerate}
\item $\exists \ 0 = 0$,
\item $a \leq \exists a$,
\item $\exists (a \wedge \exists b) = \exists a \wedge \exists b$.
\end{enumerate}

Universal quantifier $\forall : B \to B$ is dual to $\exists : B
\to B$, they are related by  $ \forall a=\neg(\exists (\neg a))$.

\begin{definition}\label{Def_ExtendedBA}
Let a set of variables $X=\{x_1,\dots ,x_n\}$ and a set of
relations $\Psi$ be given. A boolean algebra $B$ is called an
extended boolean algebra over $W(X)$ if

1. the existantial quantifier $\exists x$ is defined on $B$ for
all  $x \in X$, and  $\exists x \exists y = \exists y \exists x$
for all $x,y \in X$;

2. to every  relation symbol $\varphi \in \Psi$ of arity
$n_\varphi$ and a collection of elements $w_1,\ldots,
w_{n_\varphi}$ from $W(X)$ there corresponds a nullary operation
(a constant) of the form $\varphi(w_1,\ldots, w_{n_\varphi})$ in
$B$.
\end{definition}

Thus, the signature $L_X$ of extended boolean algebra consists
of the boolean connectives, existential quantifiers  $\exists x$
and of the set of constants $\varphi(w_1,\ldots, w_{n_\varphi})$:
$$L_X = \{\vee, \wedge, \neg, \exists x, M_X \},$$
where $M_X$ is the set of all $\varphi(w_1, \ldots, w_{n_\varphi})$.

The algebra of formulas $\Phi(X)$ is the example of an extended
boolean algebra (see \cite{Seven}, \cite{Plotkin_AG}, \cite{PAP}).
A formula $w\equiv w'$ is one of the constants, where $\varphi$ is
the equality predicate "$\equiv$". Depending on the context, we
call it equality or equation.

Consider another important example of  extended boolean
algebras. Let  $(f)=(H, \Psi,f)$ be a model. Take the affine space
$Hom(W(X),H)$ and denote by $Bool(W(X),H)$ the boolean algebra of
all subsets of $Hom(W(X),H)$.

Let us define on this algebra the existential quantifier. If $A$
is an element of  $Bool(W(X),H)$ then the element $B=\exists x A$
is defined by the rule: a point $\mu$ belongs to $B$ if there
exists a point $\nu \in A$ such that $\mu(x') = \nu(x')$ for each
$x' \in X$, $x' \neq x$.

Define now constants on $Bool(W(X),H)$. For a relational symbol
$\varphi$ of arity $m$ denote by $[\varphi(w_1, \ldots,
w_m)]_{(f)}$ the subset in $Bool(W(X),H)$ consisting  of all
points $\mu:W(X) \to H$ satisfying the relation $\varphi(w_1, \ldots,
w_m)$. This means that $(w_1^\mu, \ldots, w_m^\mu)$ belongs to the
set $f(\varphi)$, where $f(\varphi)$ is a subset in $H^m$,
consisting of all points which  belong to $\varphi$ under
interpretation $f$.

Denote this extended boolean algebra by $Hal_\Theta^X(f)$. In
particular, if $\Psi$ consists solely of the equality predicate symbol,
then the corresponding algebra is denoted by $Hal_\Theta^X(H)$.

In Section~\ref{sec:mult} we will define a homomorphism between
$\Phi(X)$ and  $Hal_\Theta^X(f)$:
$$
Val_{(f)}^X : \Phi(X) \to Hal_\Theta^X(f),
$$with the property
$$
Val_{(f)}^X(\varphi(w_1, \ldots, w_m))=
[\varphi(w_1, \ldots, w_m)]_{(f)}.
$$
This homomorphism allows us to define algebraically such notions as
''a point satisfies a formula'' and ''a logical kernel of a
point''. Such approach agrees with the model theoretic
inductive one (see \cite{Marker}).

Now we only observe, that for a formula $u \in \Phi(X)$ its image
$Val_{(f)}^X (u)$ is defined as the set of points  $\mu:W(X) \to H$
satisfying $u$.
In this case, a formula $u\in \Phi(X)$ belongs to the logical
kernel $LKer(\mu)$ of $\mu:W(X) \to H$ if and only if $\mu\in
Val_{(f)}^X(u)$. Note also, that $LKer(\mu)$ is a boolean
ultrafilter in the algebra of formulas $\Phi(X)$ containing
$X$-elementary theory  $Th^X(f)$ of the model $(H, \Psi, f)$. In
this sense, we say that $LKer(\mu)$ is an $LG$-type of the point
$\mu$ (see \cite{Marker} for the model theoretic definition of a
type and  \cite{Zhitom_types} for $LG$-type). Recall that
$Th^X(f)$ consists of all formulas $u \in \Phi(X)$ which hold true on each point
$\mu: W(X) \to H$. Thus,
$$
Th^X(f)=\bigcap_{\mu \in Hom(W(X),H)}LKer(\mu).
$$

\subsection{Galois correspondence}\label{sec:Galois}

 Define now a correspondence between sets $T$ of formulas of the form $w\equiv
w'$ in the algebra of formulas $\Phi(X)$ and subsets of points $A$
from the affine space $Hom(W(X),H)$. We set
 $T_{H}' = A$, where $A$ consists of all points $\mu: W(X) \to H$ such that $T
\subset Ker(\mu)$. In other words, $T_{H}'$ consists of all points
satisfying all formulas from $T$. We call this $T_{H}'$ \emph{an
algebraic set} defined by the set of formulas $T$.

On the other hand, for a given set of points $A$ we define a set
of formulas  $T$ as
$$
 T=A_{H}'=\bigcap_{\mu \in A} Ker(\mu).
$$
By the definition, $T$ is a congruence, it is called
\emph{$H$-closed congruence} defined by $A$. One can check that
such correspondence between  sets of formulas of the form
$w\equiv w'$ from the algebra $\Phi(X)$ and sets of points from
the affine space  $Hom(W(X),H)$ is the Galois correspondence
\cite{MacLane}.

Now we consider the case of arbitrary set of formulas
$T\subset\Phi(X)$. Let $T_{(f)}^L = A$ be a set of all points
$\mu: W(X) \to H$ such that $T \subset LKer(\mu)$. The set
 $T_{(f)}^L$ is called \emph{a definable set}
presented by the set of formulas $T$. Let now $A$ be a set of
points from $Hom(W(X),H)$. We define $A_{(f)}^L$ as
$$
 A_{(f)}^L=T=\bigcap_{\mu \in A}
LKer(\mu).
$$
Direct calculations show that  $u\in A_{(f)}^L$ if and only if
$A\subset Val_{(f)}^X(u) $. Note that  $A_{(f)}^L$ is a filter in
$\Phi(X)$ called \emph{$H$-closed filter} defined by the
set $A$.

Thus, the  Galois correspondences described above give rise to
universal algebraic geometry if $T$ is a set of equalities, and
to logical geometry if $T$ is an arbitrary set of formulas.

Recall that  a set $A$ from $Hom(W(X),H)$ is Galois-closed if
$A^{''}_{H}=A$ or $A^{LL}_{(f)}=A$, depending on the given Galois
correspondence. A congruence $T$ on $W(X)$ is Galois-closed if
 $T^{''}_{H}=T$, a filter $T$ in $\Phi(X)$ is Galois-closed if $T^{LL}_{(f)}=T$.

 So, we have a one-to-one correspondence between algebraic sets in $Hom(W(X),H)$ and closed congruences on
 $W(X)$, between definable sets in $Hom(W(X),H)$  and closed filters in the extended boolean algebra $\Phi(X)$.

\subsection{Some categories}\label{sec:vc}

In this section we define various categories, which are necessary
for  further considerations.

\subsubsection{Categories $\Theta^0$,  $\widetilde \Phi$ and $\Theta^{*}(H)$.}
Let an infinite set of variables $X^0$ and a collection $\Gamma$
of finite subsets of $X^0$ be given.

Denote by $\Theta^0$ the category of all free algebras $W(X)$ in
$\Theta$, $X \in \Gamma$. Morphisms in this category are
homomorphisms $s : W(X) \to W(Y)$.

Along with free algebras $W(X)$ we consider algebras of formulas
$\Phi(X)$, which are also associated  with the variety $\Theta$. We
define a category $\widetilde \Phi$ of all $\Phi(X)$, $X \in
\Gamma$ in such a way that to each morphism $s : W(X) \to W(Y)$ it
corresponds a morphism $s_\ast : \Phi(X) \to \Phi(Y)$ and this
correspondence gives rise to a covariant functor from $\Theta^0$
to $\widetilde \Phi$.

Define now the category $\Theta^{*}(H)$ of affine spaces over
$H\in \Theta$. Objects of this category are affine spaces
$Hom(W(X),H)$, morphisms are maps:
$$
\widetilde s: Hom(W(X),H) \to Hom(W(Y),H),
$$
where
$$
s: W(Y)\to W(X)
$$
are morphisms in the category of free algebras $\Theta^{0}$.

For a point $\mu: W(X)\to H$ the point $\nu=\widetilde s(\mu):
W(Y)\to H$ is defined as follows:
$$
\widetilde s(\mu)=\mu s : W(Y)\to H,
$$
that is,  $\nu(w)=\mu(s(w))$, $w\in W(Y)$.

Passages $W(X)\to Hom(W(X),H)$ and  $s\to \widetilde s$ give rise
to a contravariant functor
$$
\Theta^0 \to \Theta^{*}(H).
$$

There is the following
\begin{theorem}[\cite{MPP1}]
The functor $ \Theta \to \Theta^{*}(H) $ defines a duality of
categories if and only if the algebra $H$ generates the variety of
algebras $\Theta$, i.e., $\Theta=Var(H)$.

\end{theorem}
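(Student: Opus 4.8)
The plan is to unwind the two functors and show that ``duality'' is equivalent to an injectivity-plus-surjectivity statement about the passage $W(X)\mapsto Hom(W(X),H)$, and then to recognize that this latter statement is exactly the classical characterization of when $H$ generates $\Theta$. Recall that $\Theta^0\to\Theta^*(H)$ is the contravariant functor sending $W(X)$ to $Hom(W(X),H)$ and $s\colon W(Y)\to W(X)$ to $\widetilde s\colon Hom(W(X),H)\to Hom(W(Y),H)$, $\mu\mapsto\mu s$. The functor is a duality precisely when it is full, faithful, and (up to isomorphism) essentially surjective onto $\Theta^*(H)$ — but since the objects of $\Theta^*(H)$ are \emph{defined} to be the $Hom(W(X),H)$, essential surjectivity is automatic, so the content is full faithfulness: for all $X,Y\in\Gamma$ the map
$$
s\longmapsto \widetilde s,\qquad Hom_{\Theta^0}(W(Y),W(X))\longrightarrow Hom_{\Theta^*(H)}(Hom(W(X),H),Hom(W(Y),H))
$$
must be a bijection.

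First I would treat faithfulness, i.e.\ injectivity of $s\mapsto\widetilde s$. If $s,s'\colon W(Y)\to W(X)$ give $\widetilde s=\widetilde{s'}$, then $\mu s=\mu s'$ for every $\mu\colon W(X)\to H$, so $s(w)$ and $s'(w)$ have the same image under every homomorphism $W(X)\to H$, for each $w\in W(Y)$. This says $(s(w),s'(w))$ lies in the congruence $\bigcap_{\mu}Ker(\mu)$ on $W(X)$, which is trivial exactly when $W(X)$ embeds into a power of $H$, i.e.\ when $H$ generates (the quasivariety, hence here the variety) containing $W(X)$. Conversely, if $H$ does not generate $\Theta$, pick $X$ large enough to witness a nontrivial identity failing in $\Theta$ but holding in $H$, producing $s\ne s'$ with $\widetilde s=\widetilde{s'}$; so faithfulness holds iff $\Theta=Var(H)$. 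Since fullness only adds constraints, this already shows one direction of the equivalence; the remaining work is to check that when $\Theta=Var(H)$ the functor is automatically full.

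For fullness, given $F\colon Hom(W(X),H)\to Hom(W(Y),H)$ a morphism in $\Theta^*(H)$, I must produce $s\colon W(Y)\to W(X)$ with $\widetilde s=F$. The natural candidate: for each generator $y\in Y$, the assignment $\mu\mapsto F(\mu)(y)$ is a map $Hom(W(X),H)\to H$; the key point is that a morphism in $\Theta^*(H)$ is by definition induced from $\Theta^0$ structurally enough that such coordinate functions are ``polynomial'', i.e.\ of the form $\mu\mapsto\mu(w_y)$ for some $w_y\in W(X)$ — here one uses that $Hom(W(X),H)$ sits inside $H^{W(X)}$ and that morphisms respect the algebra operations, so each coordinate function is a term function, and when $\Theta=Var(H)$ term functions on $H$ are faithfully represented by elements of $W(X)$. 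Setting $s(y)=w_y$ extends to the desired homomorphism. I would lean on the standard fact that for $\Theta=Var(H)$, $W(X)$ is the algebra of $X$-ary term operations of $H$, which is precisely what identifies $Hom$-sets on the two sides.

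The main obstacle I anticipate is the fullness half, specifically pinning down exactly what ``morphism in $\Theta^*(H)$'' is allowed to be — the excerpt defines objects and the induced morphisms $\widetilde s$, but a clean proof of fullness needs either (a) the convention that $\Theta^*(H)$'s morphisms are \emph{only} the $\widetilde s$ (making fullness trivial and the whole theorem reduce to the faithfulness computation above), or (b) a genuine argument that arbitrary structure-compatible maps between these affine spaces are term-induced. I would first check the source \cite{MPP1} / the definition in the paper to see which convention is in force; under reading (a) the proof is essentially just the kernel computation of the second paragraph together with the classical equivalence ``$\bigcap_\mu Ker(\mu)=\Delta$ for all $W(X)$'' $\iff$ ``$H$ generates $\Theta$'', and I would present it that way, citing Birkhoff's characterization of generated varieties via subdirect embeddings into powers.
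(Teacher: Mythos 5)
The paper does not actually prove this statement---it is quoted from \cite{MPP1} without argument---so there is no internal proof to compare against; what follows is an assessment of your proposal on its own terms. Your reading (a) is the one in force: the paper defines the morphisms of $\Theta^{*}(H)$ to be exactly the maps $\widetilde s$ induced by homomorphisms $s\colon W(Y)\to W(X)$, so the functor is surjective on objects and on morphisms by construction, fullness is automatic, and the entire content of ``duality'' collapses to faithfulness of $s\mapsto\widetilde s$. Your second paragraph then carries the whole weight, and it is correct: $\widetilde s=\widetilde{s'}$ forces $(s(w),s'(w))\in\bigcap_{\mu}Ker(\mu)$ for all $w\in W(Y)$, and this congruence is the diagonal for every finite $X$ precisely when each $W(X)$ embeds into a power of $H$. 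One step deserves to be said more carefully: embedding into a power of $H$ a priori only places $W(X)$ in the quasivariety $SP(H)$, but for the \emph{free} algebras of $\Theta$ this is equivalent to $\Theta=Var(H)$, since identities of $\Theta$ are witnessed on free algebras (if every $W(X)$ separates points via maps to $H$, every identity of $H$ holds in every $W(X)$ and hence in $\Theta$, giving $Var(H)\supseteq\Theta$, while $H\in\Theta$ gives the reverse inclusion). Your converse can also be made concrete: if $w\neq w'$ in some $W(X)$ but $w^{\mu}=(w')^{\mu}$ for all $\mu\colon W(X)\to H$, take $Y=\{y\}$ and $s,s'\colon W(Y)\to W(X)$ with $s(y)=w$, $s'(y)=w'$; then $s\neq s'$ but $\widetilde s=\widetilde{s'}$. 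The entire third paragraph on fullness via term operations is unnecessary under the paper's conventions and can be dropped; it would only be needed if $\Theta^{*}(H)$ were given some larger, intrinsically defined class of morphisms, which it is not here.
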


\subsubsection{Categories $Hal_{\Theta}(H)$ and $Hal_{\Theta}(f)$.}\label{sub:CategHal}

For a given model $(H,\Psi,f)$ we define categories
$Hal_{\Theta}(H)$ and $Hal_{\Theta}(f)$. The first category is
related to universal algebraic geometry, while the second one to
logical geometry.

Objects of these categories are algebras $Hal_{\Theta}^{X}(H)$ and
$Hal_{\Theta}^{X}(f)$, respectively. The categories
$Hal_{\Theta}(H)$ and $Hal_{\Theta}(f)$ have different objects,
since the sets of constants in algebras $Hal_{\Theta}^{X}(H)$ and
$Hal_{\Theta}^{X}(f)$ are different (see Section~\ref{sec:eb}).

Denote morphisms for both categories by $s_\ast$. A homomorphism
$s:W(Y)\to W(X)$ gives rise to a map
$$
\widetilde s: Hom(W(X),H) \to Hom(W(Y),H).
$$
In its turn, $ \widetilde s$ defines a morphism
$$ s_*:
Bool(W(Y),H)\to Bool(W(X),H)
$$
by the rule: for an arbitrary $B\subset Hom(W(Y),H)$ we put
$$
s_{*}B=\widetilde s ^ {-1} (B)=A\subset Hom(W(X),H).
$$
Thus, $A$ is a  full pre-image of $B$ under $\widetilde s$, it
consists of all points $\mu$ from $Hom(W(X),H)$ such that
$\widetilde s (\mu)=\mu s  \in B$.

We would like to link together categories $\widetilde \Phi$ and
$Hal_{\Theta}(f)$. Let $s:W(Y)\to W(X)$,
$s_\ast:\Phi(Y)\to\Phi(X)$ and $v\in\Phi(Y)$ be given.

\begin{prop}\label{pr:romb}
A point $\mu:W(X)\to H$ satisfies  the formula  $u=s_\ast v$ if and
only if  $\mu s$ satisfies the formula $v$.
\end{prop}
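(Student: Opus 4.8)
The plan is to reduce the statement to the defining property of the valuation homomorphism $Val_{(f)}^X$ together with functoriality of the passages $X \mapsto \Phi(X)$ and $X \mapsto Hal_\Theta^X(f)$. Recall that ``$\mu$ satisfies $u$'' means, by definition, $\mu \in Val_{(f)}^X(u)$, and ``$\mu s$ satisfies $v$'' means $\mu s \in Val_{(f)}^Y(v)$. So the claim to be proved is the equivalence $\mu \in Val_{(f)}^X(s_\ast v) \iff \mu s \in Val_{(f)}^Y(v)$. I would first observe that $\mu s = \widetilde s(\mu)$ by the definition of $\widetilde s$ in the category $\Theta^\ast(H)$, so the right-hand side reads $\widetilde s(\mu) \in Val_{(f)}^Y(v)$, i.e. $\mu \in \widetilde s^{-1}\bigl(Val_{(f)}^Y(v)\bigr) = s_\ast\bigl(Val_{(f)}^Y(v)\bigr)$, where this last $s_\ast$ is the morphism $Bool(W(Y),H) \to Bool(W(X),H)$ (equivalently $Hal_\Theta^Y(f) \to Hal_\Theta^X(f)$) described in Section~\ref{sub:CategHal}.

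With that rewriting, the proposition becomes the single identity
$$
Val_{(f)}^X(s_\ast v) = s_\ast\bigl(Val_{(f)}^Y(v)\bigr),
$$
that is, the commutativity of the square formed by $Val_{(f)}^X$, $Val_{(f)}^Y$, the morphism $s_\ast : \Phi(Y) \to \Phi(X)$ in $\widetilde\Phi$, and the morphism $s_\ast : Hal_\Theta^Y(f) \to Hal_\Theta^X(f)$ in $Hal_\Theta(f)$ — which is exactly the naturality of the family $\{Val_{(f)}^X\}_{X\in\Gamma}$ as a morphism of functors $\widetilde\Phi \to Hal_\Theta(f)$. I would establish this square by induction on the structure of the formula $v\in\Phi(Y)$, using that $\Phi(Y)$ is generated, as an extended boolean algebra, by the constants $\varphi(w_1,\dots,w_m)$: the base case is the constants, the inductive step handles the boolean connectives $\vee,\wedge,\neg$ and the quantifiers $\exists y$.

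The base case is where the real content sits, so that is the step I expect to be the main obstacle. For a constant $v = \varphi(w_1,\dots,w_m)$ with $w_i \in W(Y)$, one has $s_\ast v = \varphi(s(w_1),\dots,s(w_m))$ by the way $s_\ast$ acts on $\widetilde\Phi$ (it sends the generating constants over $W(Y)$ to the corresponding constants over $W(X)$ via $s$), and then by the defining property of $Val$, $Val_{(f)}^X(s_\ast v) = [\varphi(s(w_1),\dots,s(w_m))]_{(f)}$, the set of $\mu$ with $(\mu(s(w_1)),\dots,\mu(s(w_m)))\in f(\varphi)$; on the other side $Val_{(f)}^Y(v) = [\varphi(w_1,\dots,w_m)]_{(f)}$, and applying $s_\ast = \widetilde s^{-1}$ gives the set of $\mu$ with $\widetilde s(\mu) = \mu s$ satisfying $\varphi(w_1,\dots,w_m)$, i.e. $((\mu s)(w_1),\dots,(\mu s)(w_m)) = (\mu(s(w_1)),\dots,\mu(s(w_m)))\in f(\varphi)$. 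These two sets coincide, using only that $\mu s$ is a homomorphism and $(\mu s)(w_i) = \mu(s(w_i))$. For the inductive step: $s_\ast$ and $Val$ are homomorphisms of extended boolean algebras and $s_\ast$ on $Bool$ is a full preimage, hence commutes with $\vee,\wedge,\neg$ automatically; for $\exists y$ one checks that $\widetilde s^{-1}$ intertwines the quantifier on $Hal^Y$ with the one on $Hal^X$ compatibly with how $s_\ast$ treats $\exists y$ in $\widetilde\Phi$ — this is a routine but slightly fussy verification about which coordinates are allowed to vary, and it is the only place where care about the morphism $s$ in $\Theta^0$ is needed. Assembling the base case and the inductive step yields the displayed identity, and unwinding the definitions as in the first paragraph gives the stated equivalence. $\qquad\square$
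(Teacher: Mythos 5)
Your proposal is correct and takes essentially the same route as the paper: the paper's (very terse) proof simply invokes axiom (5) of Definition~\ref{ha:ms} for the action of $s_\ast$ on the constants $\varphi(w_1,\dots,w_m)$, together with the fact that these constants freely generate $\widetilde\Phi$ as a multi-sorted Halmos algebra. Your structural induction --- constants as the base case carrying the real content, the homomorphism property handling connectives and quantifiers --- is precisely the unpacked form of that freeness argument.
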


\begin{proof}
In fact, this result follows from axiom (5) in
Definition~\ref{ha:ms}, which regulates the action  of morphism
$s_\ast$ on formulas of the form $\varphi(w_1,\dots, w_m)$.
These formulas  generate freely the algebra  $\widetilde \Phi$ as a
multi-sorted  algebra (see Section~\ref{sec:mult} or
\cite{Seven}).
\end{proof}

Let $A$ be the set of all points satisfying the formula $u=s_\ast v\in
\Phi(X)$, $B$ be the set of all points satisfying the formula
$v\in\Phi(Y)$.

\begin{prop}\label{pr:pro}
Let $A_0=s_\ast B=\widetilde s^{-1}(B)$. Then $A_0=A$.
\end{prop}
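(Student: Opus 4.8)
The plan is to unwind the definitions on both sides and check set membership directly. Fix a point $\mu: W(X)\to H$. By Proposition~\ref{pr:romb}, $\mu$ satisfies $u=s_\ast v$ if and only if $\mu s$ satisfies $v$. Since $A$ is by definition the set of points satisfying $u=s_\ast v$, this says $\mu\in A$ iff $\mu s$ satisfies $v$. On the other hand, $B$ is by definition the set of points of $Hom(W(Y),H)$ satisfying $v$, so "$\mu s$ satisfies $v$" is precisely the statement $\mu s\in B$.

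It remains to recognize the set $\{\mu : \mu s\in B\}$ as $A_0$. By the definition of $\widetilde s: Hom(W(X),H)\to Hom(W(Y),H)$ given in Section~\ref{sec:vc}, we have $\widetilde s(\mu)=\mu s$. Hence $\mu s\in B$ is equivalent to $\widetilde s(\mu)\in B$, i.e. $\mu\in \widetilde s^{-1}(B)$. By the definition of $s_\ast$ on boolean algebras (the rule $s_\ast B=\widetilde s^{-1}(B)$), this is exactly $\mu\in s_\ast B=A_0$. Chaining the equivalences: $\mu\in A$ iff $\mu s$ satisfies $v$ iff $\mu s\in B$ iff $\widetilde s(\mu)\in B$ iff $\mu\in A_0$. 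Since this holds for every $\mu$, we conclude $A=A_0$.

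The proof is essentially a bookkeeping exercise threading together Proposition~\ref{pr:romb} with the two definitions of the action of $s$ (on points via $\widetilde s$, and on boolean algebras via $s_\ast=\widetilde s^{-1}$). The only genuine content has already been isolated in Proposition~\ref{pr:romb}, whose proof rests on axiom (5) of Definition~\ref{ha:ms} and the freeness of $\widetilde\Phi$; so there is no real obstacle here beyond making sure the identifications $A=\{\mu:\mu\text{ satisfies }s_\ast v\}$ and $B=\{\nu:\nu\text{ satisfies }v\}$ are used with the correct variable sorts ($X$ for $\mu$, $Y$ for $\mu s$). I would present it in two or three sentences as a direct chain of "if and only if" statements rather than splitting into subcases.
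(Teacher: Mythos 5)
Your proof is correct and takes essentially the same approach as the paper: both reduce the claim to Proposition~\ref{pr:romb} together with the definitions of $\widetilde s$ and $s_\ast=\widetilde s^{-1}$. In fact your chain of equivalences is slightly more complete than the paper's own write-up, which only spells out the inclusion $A_0\subseteq A$ and leaves the converse (which follows from the same ``if and only if'' in Proposition~\ref{pr:romb}) implicit.
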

\begin{proof}
Let $\mu'\in Hom(W(X),H)$ belongs to $\widetilde s ^ {-1}
(B)=A_0$. By the definition this means that $\widetilde
s(\mu')=\mu' s\in B.$  Thus, $\mu'  s$  satisfies the formula $v$.
By Proposition~\ref{pr:romb}, the point $\mu'$ satisfies the
formula $u$. Hence, $\mu'\in A$.
 \end{proof}

We call a set $A$  \emph{$s$-closed} if $A_0=A$, that is,
$s_\ast(\widetilde s A)=A$. As follows from
Proposition~\ref{pr:pro}, each  definable set is $s$-closed.

Consequently, we have the commutative diagram

 \begin{equation}\label{diag:1}
\CD
\Phi(Y) @> s_\ast >> \Phi(X)\\
@V \al_{(f)}^Y  VV @VV \al_{(f)}^X V\\
Hal_\Theta (f_2)@> s_\ast=\widetilde{s}^{-1} >>Hal_\Theta (f_1),
\endCD
\end{equation}

The commutativity of this diagram means that if $v\in \Phi(Y)$,
$u=s_\ast v \in \Phi(X)$, $A=\al_{(f)}^X(u)$, $B=\al_{(f)}^Y(v)$,
then  $\al_{(f)}^X(s_\ast v)=s_\ast\al_{(f)}^Y(v)$.

From the categorical viewpoint, commutative diagram~(\ref{diag:1})
determines a covariant functor from $\widetilde \Phi$ to
$Hal_{\Theta}(f)$. From the point of view of multi-sorted
algebras, the last equality means that $Val_{(f)}$ is a
homomorphism of multi-sorted Halmos algebras.

\subsubsection{Categories $AG_{\Theta}(H)$ and $LG_{\Theta}(f)$.}
The first category is related to algebraic sets in universal
algebraic geometry, while the second one to definable sets in logical
geometry.

Objects of the category $AG_\Theta(H)$ are partially ordered sets $AG_\Theta^X(H)$ of all
algebraic sets in $Hom(W(X),H)$ with fixed $X$. 

For a given homomorphism $s:W(Y)\to W(X)$, a morphism
$$
\widetilde s_\ast: AG_\Theta^X(H)\to AG_\Theta^Y(H)
$$
is defined as follows. Let $A$ be an algebraic set in
$Hom(W(X),H)$. Then $\widetilde s_\ast A=B$ is an algebraic set
determined by the set of points of the form $\nu=\mu s$, where
$\mu\in A$. In other words, $B$ is the Galois closure of this set
of points, i.e.,  $B= \widetilde s_\ast A= (\widetilde s
A)''_{H}$.  Morphisms, defined in such a way, preserve the partial
order relation.

Objects of $LG_\Theta(f)$ are sets of all definable sets in
$Hom(W(X),H)$ with fixed $X$. We assume, that each object
$LG_\Theta^X(f)$ is a lattice.

Define morphisms in $LG_\Theta(f)$ as:
$$
\widetilde s_\ast: LG_\Theta^X(H)\to LG_\Theta^Y(H).
$$
Let $A$ be a definable  set in $Hom(W(X),H)$. Then $\widetilde
s_\ast A=B$ is a definable set given by the set of points of the
form $\nu=\mu s$, where $\mu\in A$. In other words, $B$ is the
Galois closure of this set of points, i.e., $B= \widetilde s_\ast
A= (\widetilde s A)^{LL}_{(f)}$.

\subsubsection{Categories $C_\Theta (H)$ and $F_\Theta(f)$.}

Objects of  $C_\Theta (H)$ are partially ordered sets of
$H$-closed congruences on $W(X)$. They are in one-to-one
correspondence with the objects $AG_\Theta^X(H)$. Morphism in
$C_\Theta (H)$
 $$
 \widehat s_\ast: C_\Theta^Y(H) \to C_\Theta^X(H)
 $$
is defined using the maps between $H$-closed congruences in
$C_\Theta^Y(H)$ and $C_\Theta^X(H)$. Let $T_2$ be an $H$-closed
congruence  in $C_\Theta^Y(H)$. Specify $T_1$ as an $H$-closed
congruence in $C_\Theta^X(H)$ defined by the set of all
equations of the form $s_\ast(w\equiv w')$, for all  $w\equiv w'$
from $T_2$. In other words, $T_1=(s_\ast T_2)''_{H}$.

Objects of the category  $F_\Theta(f)$ are lattices of $H$-closed
filters. We define morphisms in $F_\Theta (H)$
 $$
 \widehat s_\ast: F_\Theta^Y(H) \to F_\Theta^X(H),
 $$
using the maps between  $H$-closed filters in $F_\Theta^Y(H)$ and
$F_\Theta^X(H)$. Let $T_2$ be an $H$-closed filter in
$F_\Theta^Y(H)$. Determine $T_1$ as  the $H$-closed filter in
$F_\Theta^Y(H)$ defined by the set of formulas of the form
$s_\ast v$, for all  $v$ from $T_2$, that is, $T_1=(s_\ast
T_2)^{LL}_{(f)}$.

\subsection{Relation between categories $LG_\Theta(f)$ and
$F_\Theta(f)$}\label{sec:svjaz}

We would like to determine the duality of categories
$LG_\Theta(f)$ and $F_\Theta(f)$. According to their Galois
correspondence there is a one-to-one correspondence between objects
of these categories.

Let a homomorphism  $ s:W(Y)\to W(X)$ and a definable set $B_0$ from
$Hom (W(Y),H)$ be given.

Define the  set $A_0$ as the full pre-image of $B_0$ under $\widetilde s$, i.e., $A_0=\widetilde s^{-1} (B_0)$ (see
Section~\ref{sub:CategHal}). Let $B$ be a definable set such that
$B= \widetilde s_\ast A_0= (\widetilde s A_0)^{LL}$. Since
$\widetilde s A_0\subset B$, then  $B=(\widetilde s
A_0)^{LL}\subset B_0^{LL}=B_0$.

Define the $H$-closed filter $T_2$ as $T_2=B^L$. Then,
$s_\ast$ and  $T_2$ determine the $H$-closed filter $T_1=(s_\ast
T_2)^{LL}=\widehat s_\ast T_1.$ Finally, we put $A=T_1^L$.

There is the  commutative diagram: 

 \begin{equation}\label{d:2}
\CD
T_2 @> \widehat s_\ast >> T_1\\
@V \al_{(f)}^Y  VV @VV \al_{(f)}^X V\\
B @< \widetilde{s}_\ast << A
\endCD
\end{equation}

Indeed, since objects $A_0,B, T_2,T_1$ are defined uniquely by $B_0$ and
$s:W(Y)\to W(X)$, for the commutativity of the
diagram it is enough to check that $A_0=A$. But this equality
follows from Proposition~\ref{pr:pro}.

Moreover, $\mu\in A_0$ if and only if $\mu s\in B$. In its turn, $\mu s\in B$
 if and only if $\mu s $
satisfies each formula  $v\in T_2$. By Proposition~\ref{pr:romb},
$\mu s $ satisfies $v\in T_2$ if and only if  $\mu$ satisfies
$u=s_\ast v\in T_1$. Since $A$ consists of all points satisfying
all  formulas $u=s_\ast v\in T_1$ then $A_0=A$.

From  diagram~(\ref{d:2}) follows that for each formula  $v\in
T_2$ there is the relation
$$
Val^Y_{(f)}=\widetilde s_\ast Val^X_{(f)}s_\ast.
$$

 \begin{defin}\label{def:gr}
A map $\alpha: A\to B$ of definable sets is called {\it
generalized regular} if there is a map $\widetilde{s}_\ast:A\to
B$ satisfying commutative diagram~(\ref{d:2}) such that
$\alpha(\mu)=\wt s_\ast(\mu)$, for all $\mu\in A$.
  \end{defin}

By the definition of the map $\widetilde{s}_\ast$, the image of a
definable set under generalized regular map is a definable set.
Thus, $LG_\Theta(f)$ is the category of lattices of definable sets
with generalized regular maps as morphisms.

The similar approach works for the category of algebraic sets
$AG_\Theta(H)$. So, we have a particular case of
diagram~(\ref{d:2}):

 $$
\CD
T_2 @> s_\ast >> T_1\\
@V \al_{H}^Y  VV @VV \al_{H}^X V\\
B @< \widetilde{s}_\ast << A,
\endCD
$$
where $T_1$ and $T_2$ are the Galois-closed congruences.

\begin{defin}
A map $\alpha: A\to B$ of algebraic sets is called {\it regular}
if there is a map $\wt s_\ast:A\to B$ satisfying the commutative
diagram above such that $\alpha(\mu)=\wt s_\ast(\mu)$, for all
$\mu\in A$.
\end{defin}

Thus, $AG_\Theta(H)$ is the category of partially ordered algebraic
sets with regular maps as morphisms.

Summarizing, we have the theorem.

\begin{theorem}\label{th:anti}
Let  $Var(H)=\Theta$. The category $F_\Theta(f)$ of lattices of
$H$-closed filters is anti-isomorphic to the category $LG_\Theta(f)$
of lattices of definable sets. The category $C_\Theta (H)$ of
partially ordered congruences is anti-isomorphic to the category
$AG_\Theta(H)$ of partially ordered algebraic sets.
\end{theorem}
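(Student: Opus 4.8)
The plan is to build the two anti-isomorphisms directly from the Galois correspondences already established, checking that the constructions are functorial and mutually inverse. I would treat the two statements in parallel, since the logical case (between $F_\Theta(f)$ and $LG_\Theta(f)$) and the algebraic case (between $C_\Theta(H)$ and $AG_\Theta(H)$) have identical structure, differing only in replacing $LKer$ by $Ker$, filters by congruences, definable sets by algebraic sets, and the closure $(\ )^{LL}_{(f)}$ by $(\ )''_H$. So I would state and prove the logical case carefully and then remark that the algebraic case follows by the same argument, or set up a common lemma that subsumes both.

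\textbf{Construction of the functor on objects.} On objects I would send each lattice $F_\Theta^X(f)$ of $H$-closed filters to the lattice $LG_\Theta^X(f)$ of definable sets via $T \mapsto T^L_{(f)}$, with inverse $A \mapsto A^L_{(f)}$. That these are mutually inverse bijections is exactly the statement, recalled in Section~\ref{sec:Galois}, that Galois-closed filters correspond one-to-one to definable sets. I would also note that this bijection is order-reversing: $T_1 \subseteq T_2$ implies $(T_2)^L_{(f)} \subseteq (T_1)^L_{(f)}$ and dually, which is the elementary property of any Galois connection.

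\textbf{Construction of the functor on morphisms.} Here I would use diagram~(\ref{d:2}). Given a morphism $\widehat s_\ast : T_2 \to T_1$ in $F_\Theta(f)$ coming from $s : W(Y) \to W(X)$ — so $T_1 = (s_\ast T_2)^{LL}_{(f)}$ — I would assign to it the generalized regular map $\widetilde s_\ast : A \to B$ of Definition~\ref{def:gr}, where $A = T_1^L$, $B = T_2^L$. Diagram~(\ref{d:2}) says precisely that $\al^Y_{(f)} \circ \widehat s_\ast = \widetilde s_\ast \circ \al^X_{(f)}$ up to the vertical Galois identifications, and the discussion following that diagram (using Propositions~\ref{pr:romb} and~\ref{pr:pro}, i.e.\ $A_0 = A$) shows the assignment is well defined and that $\widetilde s_\ast A$ is again definable. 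The direction-reversal (a morphism $T_2 \to T_1$ yields a morphism $A \to B$ with $A$ over $T_1$ and $B$ over $T_2$) is what makes the correspondence an anti-isomorphism rather than an isomorphism. I would then check functoriality: the identity $s = \mathrm{id}_{W(X)}$ gives $s_\ast = \mathrm{id}_{\Phi(X)}$, hence the identity map on the lattice of definable sets; and for a composable pair $s : W(Y) \to W(X)$, $t : W(Z) \to W(Y)$ one has $(st)_\ast = s_\ast t_\ast$ on formulas (the functor $\Theta^0 \to \widetilde\Phi$ from Section~\ref{sec:vc}), so the induced maps $\widetilde{(st)}_\ast = \widetilde t_\ast \circ \widetilde s_\ast$ compose correctly. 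Finally, since every generalized regular map is by Definition~\ref{def:gr} of the form $\widetilde s_\ast$ for some $s$, the correspondence is essentially surjective on morphisms; combined with the object bijection this yields a full, faithful, bijective-on-objects contravariant functor, i.e.\ an anti-isomorphism.

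\textbf{Main obstacle.} The routine parts are the object bijection and functoriality. The step needing genuine care is \emph{faithfulness combined with well-definedness} of the morphism assignment: a priori two different homomorphisms $s, s' : W(Y) \to W(X)$ might induce the same morphism of filters but I must be sure they then induce the same map of definable sets (and conversely), so that the assignment descends to an honest bijection on hom-sets. The resolution is that a morphism in $F_\Theta(f)$ is \emph{defined} as the map $T_2 \mapsto (s_\ast T_2)^{LL}_{(f)}$ between the lattices, and a morphism in $LG_\Theta(f)$ is defined as $A \mapsto (\widetilde s A)^{LL}_{(f)}$; diagram~(\ref{d:2}) together with the identity $A_0 = A$ shows these two prescriptions are carried to one another under the $\al_{(f)}$-identifications, so the correspondence on morphisms is literally the same data viewed through the Galois bijection — there is nothing to "descend," the hom-sets are identified on the nose. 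One should still verify that $\widehat s_\ast$ genuinely preserves the lattice (order) structure and that $\widetilde s_\ast$ does too, so that these are morphisms in the asserted categories; this is the one place where I would spell out that closures of the form $(\ )^{LL}_{(f)}$ are monotone and that $s_\ast$, being a Halmos-algebra homomorphism, preserves finite meets and joins, hence passes to a monotone map on filters. The algebraic case is then identical with equations in place of formulas, $C_\Theta(H)$ in place of $F_\Theta(f)$, and the hypothesis $Var(H) = \Theta$ ensuring (via the duality theorem of Section~\ref{sec:vc}) that the constructions are nondegenerate.
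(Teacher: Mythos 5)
Your proposal is correct and follows essentially the same route as the paper, whose own proof simply states that the result follows from diagram~(\ref{d:2}) together with the observation that $Var(H)=\Theta$ makes $s\mapsto\widetilde s_\ast$ well defined; you have expanded exactly these two points (object bijection via the Galois correspondence, morphism correspondence and direction reversal via diagram~(\ref{d:2}), with the duality theorem handling nondegeneracy).
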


\begin{proof}
The proof of Theorem \ref{th:anti} follows from diagram~(\ref{d:2}).
The condition $Var(H)=\Theta$ ensures that the ho\-mo\-morphism $ s:W(Y)\to W(X)$ uniquely  defines 
morphism $\widetilde
s_\ast$.

\end{proof}

\subsection{Multi-sorted Halmos algebras}\label{sec:mult}

In Section~\ref{sub:CategHal} we defined the categories
$Hal_\Theta(H)$ and $Hal_\Theta(f)$. There is a natural way to treat
these categories as multi-sorted algebras (see \cite{P9},
\cite{PAP}). We put
$$
 Hal_\Theta(H)=(Hal_{\Theta}^{X}(H), X\in
\Gamma),
$$
$$
Hal_\Theta(f)=(Hal_{\Theta}^{X}(f), X\in \Gamma).
$$
In this case, objects of the categories are presented as domains
of multi-sorted algebras, while morphisms $s_{*}$ are unary
operations between domains. These  algebras are Halmos algebras.

\begin{remark}\label{rm:def}
We widely use the name P. Halmos, because he was one of the
creators of algebraic logic. He introduced the important notion of
a polyadic algebra. Along with other notions of universal algebra
and universal algebraic geometry, the notion of a polyadic algebra
gave rise to the theory, which, in particular, is used in this
paper.
\end{remark}

For the precise definition of a multi-sorted Halmos algebra, first of
all,  we specify a signature of such algebras.

Let a finite set $X$ from $\Gamma$, a variety $\Theta$,  an algebra
$H\in \Theta$ and  a set of relation symbols  $\Psi$ be given.
The signature $L^\Psi$ of a multi-sorted Halmos algebra
$\mathfrak{L}=(\mathfrak{L}_X, X \in \Gamma)$ includes the
signature of extended boolean algebras  $L_X$ (see
Section~\ref{sec:eb}) and operations of the form $s_\ast:
\mathfrak{L}_X \to \mathfrak{L}_Y$, which correspond to morphisms
$s:W(Y) \to W(X)$ in $\Theta^0$.

\begin{definition}\label{ha:ms}
A multi-sorted algebra $\mathfrak{L}=(\mathfrak{L}_X, X \in \Gamma)$
in the signature $L^\Psi$ is a Halmos algebra if
\begin{enumerate}
\item
 Each domain $\mathfrak{L}_X$ is an extended boolean algebra in the signature $L_X$.

\item %
Each map  $s_*: \mathfrak{L}_X \to \mathfrak{L}_Y$ is a
homomorphism of boolean algebras.

\item %
For given $s_{1*}: \mathfrak{L}_X\to \mathfrak{L}_Y$ and  $s_{2*}:
\mathfrak{L}_Y\to \mathfrak{L}_Z$ there is the equality:
 $$
 s_{1*}s_{2*}=(s_1s_2)_\ast.
 $$
In other words, it means that the correspondence $W(X)\to
\mathfrak{L}_X$ and  $s\to s_\ast$ define a covariant functor
from the category  $\Theta^0$ to the category $\mathfrak{L}$.

\item %
Next two axioms control the interaction of $s_{*}$ with
quantifiers:

 \begin{itemize}
 \medskip
\item[(a)] %
Let  $s_1: W(X)\to W(Z)$ and   $s_2: W(X)\to W(Z)$ be given.
Suppose, that $s_1( y) = s_2(y)$ for all $y\neq x$, $x$, $y\in X$.
Then
$$
s_{1*} \exists x a = s_ {2*} \exists x a, \ a \in
\mathfrak{L}_X.
$$

 \medskip

 \item[(b)]
Let $s: W(X)\to W(Y)$ and $s(x)=y$ be given, $x\in X$, $y\in Y$.
Let $x'\neq x$, $x'\in X$. Suppose, that $s(x')=w$, where $w\in
W(Y)$, and $y$ does not belong to the support of $w$. This
condition means, that $y$ does not participate in the shortest
expression of the element $s(x')\in W(Y)$. Then
$$
 s_{*}(\exists x
a) = \exists (s(x)) (s_*a),\ a \in \mathfrak{L}_X.
$$
\end{itemize}

 \item
Let a relation symbol $\varphi \in \Psi$ of arity $m$ and $s:
W(X)\to W(Y)$  be given. Then
$$
s_\ast (\varphi(w_1,\ldots, w_m))=\varphi(sw_1,\ldots,sw_m).
$$

In particular,  for each equation $w\equiv w'$ we have
 $$
 s_{*}(w\equiv w')=(s(w)\equiv s(w')).
 $$

\end{enumerate}
\end{definition}

Halmos algebras constitute  a variety, denote it by $Hal_\Theta$.
Moreover, the following fact takes place.
\begin{theorem}\label{th:var}
Let a model  $(f)=(H,\Psi,f)$, $H\in \Theta$, be given. The
variety $Hal_\Theta$ is generated by all algebras $Hal_\Theta(f)$
for all $H \in \Theta$.
\end{theorem}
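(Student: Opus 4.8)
The plan is to show that the variety $Hal_\Theta$ is generated by the family $\{Hal_\Theta(f) : H\in\Theta\}$ by establishing that every identity in the signature $L^\Psi$ that holds in all the algebras $Hal_\Theta(f)$ already holds in the free Halmos algebra, hence in the whole variety. The natural approach is to exhibit the multi-sorted algebra of formulas $\widetilde\Phi$ as the free object of $Hal_\Theta$ freely generated (as a multi-sorted algebra) by the set of atomic formulas $\varphi(w_1,\dots,w_m)$ over the $W(X)$, and then to show that this free algebra embeds into a product of the algebras $Hal_\Theta(f)$. By Birkhoff's theorem, if $\widetilde\Phi$ is free in $Hal_\Theta$ and embeds into a product of the $Hal_\Theta(f)$, then every subvariety containing all $Hal_\Theta(f)$ contains $\widetilde\Phi$, hence contains the free algebra on every set of generators, hence is all of $Hal_\Theta$.

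First I would recall (citing \cite{Seven}, as the excerpt does) that $\widetilde\Phi = (\Phi(X), X\in\Gamma)$ is the free multi-sorted Halmos algebra over $\Theta$: the atomic formulas $\varphi(w_1,\dots,w_m)$ generate each $\Phi(X)$ and the defining axioms (1)--(5) of Definition~\ref{ha:ms} are exactly the relations imposed, so any assignment of the atomic formulas into a Halmos algebra extends uniquely to a homomorphism. Next I would use the homomorphisms $Val_{(f)}: \widetilde\Phi \to Hal_\Theta(f)$ constructed via diagram~(\ref{diag:1}), which are genuine homomorphisms of multi-sorted Halmos algebras. The key claim is then: the combined map $\prod_{(f)} Val_{(f)} : \widetilde\Phi \to \prod_{(f)} Hal_\Theta(f)$, where the product runs over all models $(f)=(H,\Psi,f)$ with $H\in\Theta$ (it suffices to take, for each $H$, one fixed interpretation, e.g.\ the one making every $\varphi$ the full relation, or better to let $f$ range over all interpretations and $H$ over a generating set of $\Theta$), is injective on each domain $\Phi(X)$. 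Injectivity reduces to: if $u,u'\in\Phi(X)$ are distinct formulas, there is a model $(f)$ and a point $\mu:W(X)\to H$ witnessing the difference, i.e.\ $\mu\in Val_{(f)}^X(u)\,\triangle\, Val_{(f)}^X(u')$. This is where the real content lies and where I would invoke completeness-type information about $\Phi(X)$: distinct elements of the free extended boolean algebra $\Phi(X)$ are separated by some point in some model, because $\Phi(X)$ is (by its construction as a Lindenbaum-type algebra, cf.\ \cite{PAP}) semisimple with respect to the class of all such valuations. Concretely, a nonzero element $a=u\wedge\neg u'$ (or its dual) has, in some model, a point satisfying it; I would argue this by building the model syntactically — take $H$ generated by a suitable quotient of some $W(Z)$, define $f$ from a maximal filter of $\Phi(Z)$ containing the appropriate translate of $a$, and read off the witnessing point.

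Once injectivity of $\prod Val_{(f)}$ is in hand, the conclusion is immediate: $\widetilde\Phi$ is free in $Hal_\Theta$ and lies in the variety generated by $\{Hal_\Theta(f)\}$, and since the free algebras of $Hal_\Theta$ on arbitrary generating sets are obtained from copies of $\widetilde\Phi$ (or directly: any identity failing in $Hal_\Theta$ fails in some $\widetilde\Phi$, hence in some $Hal_\Theta(f)$), the variety generated by the $Hal_\Theta(f)$ is all of $Hal_\Theta$. The main obstacle is precisely the separation/completeness step — showing that two distinct formulas are distinguished by a point in some model $(H,\Psi,f)$ with $H\in\Theta$; this is essentially a Henkin-style model-existence argument internal to the algebra of formulas, and it must be done carefully so that the algebra $H$ produced genuinely lies in $\Theta$ (which holds because $H$ is a quotient of a free algebra $W(Z)$ of $\Theta$). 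Everything else — freeness of $\widetilde\Phi$, that $Val_{(f)}$ is a Halmos homomorphism, and the Birkhoff-style wrap-up — is either cited from the earlier sections and \cite{Seven}, \cite{PAP} or is routine.
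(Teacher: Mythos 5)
The paper itself gives no proof of Theorem~\ref{th:var}: it is stated as a known fact (it is the algebraic completeness/representation theorem for multi-sorted Halmos algebras, going back to \cite{P9} and \cite{PAP}), and the paper then \emph{uses} it to justify the identification $\widetilde\Phi=\mathfrak L^0/\rho$. So there is no in-paper argument to compare against. Your architecture --- $\widetilde\Phi$ is free in $Hal_\Theta$ on the multi-sorted set $M$ of atomic formulas, the maps $Val_{(f)}$ are Halmos homomorphisms, hence it suffices to show $\prod_{(f)}Val_{(f)}$ is injective, and then a Birkhoff-style wrap-up --- is the standard and correct skeleton, and you are right that there is no circularity in it (you use freeness by definition, not the quotient description that the paper derives from the theorem).

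However, the separation step you isolate is not a technicality to be ``done carefully''; it is the entire theorem, and your sketch of it has a real hole. A \emph{maximal Boolean filter} of $\Phi(Z)$ containing a nonzero element $a$ is not enough: to read off a point $\mu$ and a model $(H,\Psi,f)$ from a filter $U$ so that $LKer(\mu)\supset U$, the filter must respect the quantifiers, i.e., for every $\exists x\,u\in U$ it must contain a substitution instance $s_\ast u$ witnessing the existential (a Henkin/rich ultrafilter), and one must check that the equalities $w\equiv w'$ in $U$ form a congruence $\theta$ on the relevant free algebra so that $H=W/\theta$ lies in $\Theta$, and that setting $f(\varphi)=\{(\overline w^\mu):\varphi(\overline w)\in U\}$ is well defined and reproduces exactly the constants in $U$. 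None of this is carried out, and without the witness property the constructed point simply fails to satisfy the quantified formulas of $U$. Separately, your parenthetical reduction to ``one fixed interpretation per $H$, e.g.\ the full relation'' is false: the elements $\varphi(w_1,\ldots,w_{n_\varphi})$ are nullary operations of the signature $L^\Psi$, so with $f(\varphi)=H^{n_\varphi}$ the closed identity $\varphi(w_1,\ldots,w_{n_\varphi})=1$ holds in every such $Hal_\Theta(f)$ but is certainly not an identity of $Hal_\Theta$; the interpretation $f$ must range over all choices, as the statement of the theorem already indicates. In short: correct outline, but the decisive lemma (the algebraic G\"odel--Henkin argument) is asserted rather than proved, and one of your proposed simplifications breaks the argument.
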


Now we give a more precise definition of the algebra $\widetilde
\Phi=(\Phi(X), X\in \Gamma)$ and homomorphism $Val_{(f)}$.
For the detailed constructions of $\widetilde \Phi$ and $Val_{(f)}$ see
\cite{Seven}, \cite{Plotkin_AG}, \cite{Pl_GAGTA}, \cite{PAP}.

Let  $\varphi$ denote a relation symbol of arity $m$ from $\Psi$,
$M_X$ be the set of all $\varphi(w_1, \ldots,w_m)$, $w_i\in W(X)$.

The algebra $\widetilde \Phi= (\Phi(X), X \in \Gamma)$ is the free
algebra generated by multi-sorted set  $M=(M_X, X \in \Gamma)$ in
the variety $Hal_\Theta$.

For each $X$ we define a map
$$
 M_X \to Hal_\Theta^X(f)
$$
by the rule
$$
\varphi(w_1, \ldots,w_m) \to [\varphi(w_1,
\ldots,w_m)]_{(f)}.
$$
It induces the map of multi-sorted sets
$$
M \to Hal_\Theta(f).
$$
Since $M$  generates  freely the algebra $\widetilde \Phi$, then the
last map can be extended up to the homomorphism of multi-sorted
algebras
$$
 Val_{(f)}: \widetilde \Phi \to Hal_\Theta(f).
 $$
On components we have
$$
Val_{(f)}^X: \Phi(X) \to Hal_\Theta^X(f).
$$

Note that the algebra $\widetilde \Phi$ can be defined
semantically. Let $\mathfrak L^{0}$ be an absolutely free algebra
generated by the set  $M$ in the signature $L^\Psi$. Let a model
$(f)=(H,\Psi, f)$ and the corresponding algebra $Hal_\Theta(f)$ be
given. We will treat the algebras $Hal_\Theta(f)$ as universal
realization of the algebra $\mathfrak L^{0}$.

With each element $\varphi(w_1, \ldots,w_m)\in\mathfrak L^{0} $ we
associate an element $[\varphi(w_1, \ldots,w_m)]_{(f)}\in
Hal_\Theta(f)$. This correspondence gives rise to the ho\-mo\-morphism
$$
Val^0_{(f)}: \mathfrak L^{0} \to Hal_\Theta(f).
$$

Denote by $\rho_{(f)}$ the kernel of this homomorphism. Note that
it coincides with the set of identities of the algebra
$Hal_\Theta(f)$. Let us consider the congruence
$$
 \rho=\bigcap_{(f)}
\rho_{(f)}.
$$
Since $Val^0_{(f)}$ is a unique homomorphism from $\mathfrak
L^{0}$ to $Hal_\Theta(f)$ and all algebras of the form
$Hal_\Theta(f)$ generate the variety $Hal_\Theta$, then
$$
\widetilde \Phi= \mathfrak L^{0}/\rho.
$$
This expression gives rise to a description of the algebra
$\widetilde \Phi$ which allows us to calculate  images of the
elements from $\mathfrak L^{0}$ in the algebra $Hal_\Theta(f)$. In
this sense, this is a semantical definition of  $\widetilde \Phi$.

All above can be summarized in the diagram
$$
\CD
\mathfrak L ^{0} @[2]> Val^0_{(f)}  >> Hal_\Theta(f)\\
 @[2]/SE/\rho //@.@.\;    @/NE// Val_{(f)}/\\
 @.\widetilde{\Phi}
\\
\endCD
$$

\section{Logical geometry and knowledge bases}

\subsection{From logic and geometry to knowledge theory}

In the previous section we introduced a necessary system of
notions. All these concepts naturally arise and interact in a
certain order. The further exposition will be related to
applications to knowledge bases.

\subsection{Knowledge bases}

From now on we will treat  categories  $F_\Theta(f)$ and
$LG_\Theta(f)$ as {\it the categories of description of a knowledge
and content of a knowledge}, accordingly.

Recall that we distinguished three components of  knowledge
rep\-re\-sentation:
\begin{itemize}
\item \emph{description of  knowledge,}

\item \emph{subject  area of  knowledge,}

\item \emph{content of knowledge.}
\end{itemize}

The next three mathematical objects correspond to these
components:

\begin{itemize}
\item \emph{the category of lattices of $H$-closed filters}
$F_\Theta(f)$,

\item \emph{a model }  $(H,\Psi,f)$,

\item \emph{the category of lattices of definable sets}
$LG_\Theta(f)$.
\end{itemize}

\begin{defin}\label{df:kno}
A knowledge base $KB = KB(H,\Psi,f)$ is a triple
$(F_\Theta(f),LG_\Theta(f),Ct_f)$, where $F_\Theta(f)$ is the
category of description of  knowledge, $LG_\Theta(f)$ is the
category of content of knowledge, and
$$
Ct_f: F_\Theta(f) \to
LG_\Theta(f)
$$
is a contravariant functor.
\end{defin}
The functor $Ct_f$ transforms the knowledge description to the
knowledge content. Morphisms of the categories
$F_\Theta(f)$ and $LG_\Theta(f)$ make  knowledge bases a
dynamical object.

\begin{remark}
We use the term ''knowledge bases'' instead of a more precise ''a
knowledge base model''.
\end{remark}

For a given model $(f)=(H,\Psi,f)$, each concrete knowledge is a
triple $(X,T, A)$, where  $X\in \Gamma$,  $T$ is a set of
formulas from $\Phi(X)$ and  $A$ is the set of points from
$Hom(W(X),H)$ such that $A=T^L_{(f)}=(T_{(f)}^{LL})^L$. Therefore,
$T$ and $T_{(f)}^{LL}$ describe the same content $A$.

\subsection{Isomorphism of knowledge bases}

The definition of an iso\-morphism of two knowledge bases $KB_1$ and
$KB_2$ assumes an isomorphism of categories of knowledge content,
which implies the isomorphism of ca\-te\-gories of descriptions of
knowledge $F_\Theta(f_1)$ and  $F_\Theta(f_2)$. Thus,

\begin{defin}\label{d:inaut}
Knowledge bases $KB_1 = KB(H_1,\Psi,f_1)$ and $KB_2=
KB(H_2,\Psi,f_2)$ are called isomorphic if they match the
commutative diagram
$$ \CD
 F_\Theta(f_1)@>\alpha>> F_\Theta(f_2)\\
@V Ct_{f_1} VV @VV Ct_{f_2}V\\
LG_\Theta(f_1)@>\beta>> LG_\Theta(f_2),\\
\endCD
$$
where $\alpha$ and  $\beta$  are isomorphisms of categories.
\end{defin}

Let us return to the ideas of logical geometry with respect to
knowledge bases. We will use some material from \cite{APP}.

\begin{defin}\label{def:lgeq}
Models $(f_1)=(H_1,\Psi, f_1)$ and  $(f_2)=(H_2,\Psi, f_2)$ are
called $LG$-equivalent, if for each  $X\in \Gamma$ and $T\subset
\Phi(X)$ the following equality takes place
$$
T_{(f_1)}^{LL} = T_{(f_2)}^{LL}.
$$
\end{defin}

Recall that the logical kernel $LKer(\mu)$ of a point $\mu\in
Hom(W(X),H)$ is  $X$-$LG$-type of $\mu$. Denote by $S^{X}(f)$ the
set of all $X$-$LG$-types of the model $(f)$.

\begin{defin}
Models $(f_1)=(H_1, \Psi, f_1)$ and  $(f_2)=(H_2, \Psi, f_2)$ are
called $LG$-isotypic, if
$$
 S^{X}(f_1)=S^{X}(f_2),
$$
for each finite $X\in \Gamma$.
\end{defin}

In other words, models $(H_1,\Psi, f_1)$ and $(H_2,\Psi, f_2)$ are
$LG$-isotypic, if the subject areas of algebras $H_1$ and $H_2$ have
equal possibilities with respect to solution of logical formulas
from $T\subset \Phi(X)$ for each finite $X\in\Gamma$. The notions
of $LG$-isotypeness and $LG$-equivalence are tightly connected.

\begin{theorem}[\cite{APP}]\label{th:aa}
Models $(H_1, \Psi, f_1)$ and $(H_2, \Psi, f_2)$ are $LG$-equivalent
if and only if they are $LG$-isotypic.
\end{theorem}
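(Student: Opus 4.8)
The plan is to establish the equivalence by unwinding the definitions of the two notions in terms of logical kernels, using the fact that the double Galois closure $T^{LL}_{(f)}$ and the set of $LG$-types $S^X(f)$ encode the same information about the model. First I would recall that for a fixed $X \in \Gamma$ and a set of formulas $T \subset \Phi(X)$, by the Galois correspondence of Section~\ref{sec:Galois} we have $T^{LL}_{(f)} = \bigcap_{\mu \in T^L_{(f)}} LKer(\mu)$, where $T^L_{(f)}$ is the set of points whose logical kernel contains $T$. The key observation is that $T^{LL}_{(f)}$ is determined by the family of all logical kernels $LKer(\mu)$ that contain $T$; equivalently, an $H$-closed filter is an intersection of a family of $LG$-types from $S^X(f)$.

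For the direction ``$LG$-isotypic $\Rightarrow$ $LG$-equivalent'': assume $S^X(f_1) = S^X(f_2)$ for all finite $X$. Fix $X$ and $T \subset \Phi(X)$. Then the set of $LG$-types containing $T$ is the same computed in $(f_1)$ or in $(f_2)$, namely $\{ p \in S^X(f_1) : T \subset p\} = \{ p \in S^X(f_2) : T \subset p\}$. Since $T^{LL}_{(f_i)} = \bigcap \{ p \in S^X(f_i) : T \subset p\}$, these two intersections coincide, so $T^{LL}_{(f_1)} = T^{LL}_{(f_2)}$, which is exactly $LG$-equivalence.

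For the converse ``$LG$-equivalent $\Rightarrow$ $LG$-isotypic'': assume $T^{LL}_{(f_1)} = T^{LL}_{(f_2)}$ for all $X$ and all $T$. I would show $S^X(f_1) = S^X(f_2)$ by showing each $LG$-type of one model is an $LG$-type of the other. Take $p = LKer(\mu) \in S^X(f_1)$; apply the $LG$-equivalence hypothesis to $T = p$ itself. Since $p$ is already an $H_1$-closed filter (it is an ultrafilter containing $Th^X(f_1)$, hence equals its own double closure), we get $p = p^{LL}_{(f_1)} = p^{LL}_{(f_2)}$, so in particular $p^{L}_{(f_2)} \neq \varnothing$, i.e.\ there is a point $\nu \in Hom(W(X), H_2)$ with $p \subset LKer(\nu)$. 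Because $p$ is a maximal filter (an ultrafilter) in $\Phi(X)$ and $LKer(\nu)$ is a proper filter (it does not contain $0$), the inclusion forces $p = LKer(\nu) \in S^X(f_2)$. By symmetry $S^X(f_1) = S^X(f_2)$.

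\textbf{Main obstacle.} The delicate point I expect to spend care on is the converse direction: precisely the step where one must conclude that the double closure $p^{LL}_{(f_2)}$ of an $(f_1)$-type $p$ is \emph{realized} as an actual logical kernel in $(f_2)$, not merely as an intersection of several of them. This relies on $LKer(\mu)$ being an ultrafilter in the extended boolean algebra $\Phi(X)$ (stated in Section~\ref{sec:eb}) together with the fact that $p^{LL}_{(f_2)}$, being equal to the proper filter $p$ by hypothesis, is itself an ultrafilter and therefore cannot be a proper intersection of distinct types --- so the nonempty definable set $p^L_{(f_2)}$ must consist of points all having the \emph{same} logical kernel, equal to $p$. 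Making this maximality/non-emptiness argument airtight (in particular checking $p^L_{(f_2)} \neq \varnothing$, which is where $p^{LL}_{(f_2)} = p \neq \Phi(X)$ is used) is the crux; the rest is bookkeeping with the Galois correspondence.
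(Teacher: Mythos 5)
The paper itself gives no proof of Theorem~\ref{th:aa}: it is quoted from the reference [APP] (Aladova--Plotkin--Plotkin, \emph{Isotypeness of models and knowledge bases equivalence}), so there is no in-paper argument to compare yours against line by line. Judged on its own, your proof is correct and complete, and it uses exactly the ingredients this paper makes available: the identity $T^{LL}_{(f)}=\bigcap\{\,p\in S^X(f): T\subset p\,\}$ coming from the Galois correspondence of Section~\ref{sec:Galois}, and the fact recorded in Section~\ref{sec:eb} that $LKer(\mu)$ is a boolean ultrafilter in $\Phi(X)$. The forward direction (isotypic $\Rightarrow$ equivalent) is immediate as you say. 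In the converse, the two points you flag as delicate are handled correctly: $p^{LL}_{(f_1)}=p$ needs no ultrafilter property at all, since $\mu$ itself witnesses $p\subset LKer(\mu)=p$ and the double closure is squeezed between $p$ and $LKer(\mu)$; and $p^{L}_{(f_2)}\neq\varnothing$ follows because an empty intersection would give $p^{LL}_{(f_2)}=\Phi(X)$, contradicting the properness of the ultrafilter $p$. Maximality of $p$ then forces $p=LKer(\nu)$ for any realizing $\nu$, which is the crux you correctly identified. This is almost certainly the same mechanism as in [APP]; in any case it is a valid, self-contained proof.
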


\begin{remark}
Isotypiness of models imposes  some constraints on interpretations
$f_1$ and $f_2$. Let a point $\mu\in Hom(W(X),H_1)$ satisfies a
formula $u=\varphi(w_1,\ldots,w_m)$, $\varphi\in\Psi$. Then
$\nu\in Hom(W(X),H_2)$ satisfies the same $u$. Thus
$(w_1^\mu,\ldots,w_m^\mu))\in f_1(\varphi)\subset H_1^m$ if and
only if $(w_1^\nu,\ldots,w_m^\nu)\in f_2(\varphi)\subset H_2^m$.
In particular, $w_i^\mu=w_j^\mu$ if and only if $w_i^\nu=w_j^\nu$.
\end{remark}

The next theorem ties together  isotypeness of  models and
isomorphism of knowledge bases.

\begin{theorem}\label{th:iso}
If models $(H_1, \Psi, f_1)$ and $(H_2, \Psi, f_2)$ are isotypic
then the corresponding knowledge bases are isomorphic.
\end{theorem}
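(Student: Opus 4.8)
The plan is to reduce the statement, via Theorem~\ref{th:aa}, to the $LG$-equivalence of the two models, and then to exploit the Galois duality between $H$-closed filters and definable sets. By Theorem~\ref{th:aa} the hypothesis that $(H_1,\Psi,f_1)$ and $(H_2,\Psi,f_2)$ are isotypic is equivalent to $T^{LL}_{(f_1)}=T^{LL}_{(f_2)}$ for every $X\in\Gamma$ and every $T\subset\Phi(X)$, so I would work with this form of the hypothesis from the start. The first step is to read off from this identity that the two closure operators $(\ )^{LL}_{(f_1)}$ and $(\ )^{LL}_{(f_2)}$ on each algebra of formulas $\Phi(X)$ literally coincide. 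Consequently an $H_1$-closed filter in $\Phi(X)$ is exactly an $H_2$-closed filter, because such a filter is by definition a set of the form $T^{LL}_{(f)}$; hence the partially ordered sets $F^X_\Theta(f_1)$ and $F^X_\Theta(f_2)$ are equal for each $X$.

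Next I would upgrade this to an equality of categories. The morphism $\widehat s_\ast\colon F^Y_\Theta(f)\to F^X_\Theta(f)$ associated with $s\colon W(Y)\to W(X)$ sends $T_2$ to $(s_\ast T_2)^{LL}_{(f)}$, where $s_\ast\colon\Phi(Y)\to\Phi(X)$ is an operation of the \emph{model-independent} multi-sorted algebra $\widetilde\Phi$. Since the maps $s_\ast$ do not depend on the model and, by the previous step, the closures agree, the morphism $\widehat s_\ast$ is the same in $F_\Theta(f_1)$ and in $F_\Theta(f_2)$. Therefore $F_\Theta(f_1)=F_\Theta(f_2)$ as categories, and one may take $\alpha=\mathrm{id}$ in Definition~\ref{d:inaut}.

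It remains to build $\beta$ and check commutativity. Each functor $Ct_{f_i}\colon F_\Theta(f_i)\to LG_\Theta(f_i)$ is the contravariant functor induced by the Galois correspondence $T\mapsto T^L_{(f_i)}$, and by Theorem~\ref{th:anti} it is an anti-isomorphism of categories, hence has a contravariant inverse $Ct_{f_i}^{-1}$. I would then set $\beta=Ct_{f_2}\circ Ct_{f_1}^{-1}\colon LG_\Theta(f_1)\to LG_\Theta(f_2)$; the composite of two contravariant isomorphisms is a covariant isomorphism of categories, and by construction $\beta\circ Ct_{f_1}=Ct_{f_2}=Ct_{f_2}\circ\alpha$, which is exactly the commuting square required by Definition~\ref{d:inaut}. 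Explicitly, on objects $\beta$ carries the definable set $T^L_{(f_1)}$ to $T^L_{(f_2)}$; this is well defined and order preserving because $T^L_{(f_1)}=(T')^L_{(f_1)}\Leftrightarrow T^{LL}_{(f_1)}=(T')^{LL}_{(f_1)}\Leftrightarrow T^{LL}_{(f_2)}=(T')^{LL}_{(f_2)}\Leftrightarrow T^L_{(f_2)}=(T')^L_{(f_2)}$, and on morphisms it transports $\widetilde s_\ast$ through these bijections, the compatibility being diagram~(\ref{d:2}).

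The main obstacle I foresee is bookkeeping rather than conceptual. The affine spaces $Hom(W(X),H_1)$ and $Hom(W(X),H_2)$ are genuinely different sets and there is in general no map between their points; one must therefore phrase $\beta$ purely as an identification of the \emph{lattices of definable sets}, mediated by $\widetilde\Phi$ and the Galois closures, and never as a map of points. A second point to keep in mind is that Theorem~\ref{th:anti} assumes $Var(H_i)=\Theta$ so that $s$ determines $\widetilde s_\ast$ uniquely; either this should be a standing assumption here or one should first observe that isotypeness of $H_1$ and $H_2$ forces them to generate the same variety and then argue inside it. Once the identification $F_\Theta(f_1)=F_\Theta(f_2)$ is secured, the remaining verifications — functoriality of $\beta$, preservation of the lattice order, and the diagram chase — are routine.
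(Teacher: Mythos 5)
Your proposal is correct and follows the same opening move as the paper --- both reduce isotypeness to $LG$-equivalence via Theorem~\ref{th:aa} and both observe that $\alpha$ can be taken to be the identity on $F_\Theta(f_1)=F_\Theta(f_2)$ --- but the two arguments are organized in opposite directions and rest on different supports. The paper first obtains the isomorphism $\beta$ of the categories $LG_\Theta(f_1)$ and $LG_\Theta(f_2)$ by citing an external result (Theorem~6.12 of \cite{APP}, which states that $LG$-equivalent models have isomorphic categories of definable sets) and only then reads off from diagram~(\ref{d:2}) that the $F$-categories coincide. You instead establish the coincidence of the $F$-categories first, directly from the identity of the closure operators $(\;)^{LL}_{(f_1)}=(\;)^{LL}_{(f_2)}$ and the model-independence of $s_\ast$ in $\widetilde\Phi$, and then manufacture $\beta=Ct_{f_2}\circ Ct_{f_1}^{-1}$ from the anti-isomorphisms of Theorem~\ref{th:anti}; commutativity of the square in Definition~\ref{d:inaut} is then automatic. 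What your route buys is self-containment: it replaces the appeal to \cite{APP} by an explicit construction of $\beta$ inside the machinery already set up in the paper, and it makes visible exactly where the hypothesis $Var(H_i)=\Theta$ is used (namely in Theorem~\ref{th:anti}, so that $Ct_{f_i}$ is invertible). Your caveat on that hypothesis is well taken --- the theorem as stated does not impose it, and the paper's own proof is silent on the point; your suggested remedy (note that isotypic models satisfy the same identities, hence generate the same variety, and work there) is the right way to close that gap. The one thing to be careful about, which you already flag, is that $\beta$ must be defined lattice-theoretically via $T^L_{(f_1)}\mapsto T^L_{(f_2)}$ and never as a map of points, since the affine spaces over $H_1$ and $H_2$ are unrelated as sets; your well-definedness check via the chain of equivalences handles this correctly.
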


\begin{proof}
By Theorem~\ref{th:aa}, isotypic models are $LG$-equaivalent.
Theorem~6.12 from \cite{APP} states that if the models $(H_1,
\Psi, f_1)$ and $(H_2, \Psi, f_2)$ are $LG$-equivalent  then the
categories  $LG_\Theta(f_1)$ and $LG_\Theta(f_2)$ are isomorphic.
From diagram~(\ref{d:2}) follows that the categories
$F_\Theta(f_1)$ and  $F_\Theta(f_2)$ are isomorphic. In fact, they
coincide and isomorphism $\alpha$ is the identity
isomorphism of the categories. Therefore, knowledge bases $KB_1 =
KB(H_1,\Psi,f_1)$ and $KB_2= KB(H_2,\Psi,f_2)$ are isomorphic.
\end{proof}

\begin{defin}\label{d:iso}
Knowledge bases $KB_1 = KB(H_1,\Psi,f_1)$ and $KB_2=
KB(H_2,\Psi,f_2)$ are called isotypic if the models
$(H_1,\Psi,f_1)$ and $(H_2,\Psi,f_2)$ are isotypic.
\end{defin}

According to Theorem~\ref{th:iso}, isotypic knowledge bases are
isomorphic. Theorem~\ref{th:iso} generalizes the theorem from
\cite{PP_AU}, which states that knowledge bases over finite
automorphic models are informationally equivalent. It also
generalizes  the result from \cite{APP} about informational
equivalence  of isotypic knowledge bases.

Let us treat the isomorphism problem for knowledge bases  from a slightly different angle.

\begin{defin}\label{def:isom}
Let $\varphi_1$ and $\varphi_2$ be functors from a category $C_1$
to a category $C_2$. We will say that an isomorphism of functors
$S:\varphi_1 \to \varphi_2$ is given, if for each morphism $\nu:
A\to B$ from $C_1$ the following commutative diagram takes place
$$
\CD
\varphi_1(A) @> S_{A} >> \varphi_2(A)\\
@V \varphi_1(\nu)  VV @VV \varphi_2(\nu) V\\
\varphi_1(B) @>S_{B} >> \varphi_2(B).
\endCD
$$
Here $S_A$ is $A$-component of $S$, i.e., $S_A$ is a function
which provides a bijection between  $\varphi_1(A)$ and
$\varphi_2(A)$. The same condition holds true for $S_B$.
\end{defin}

An invertible functor from a category to itself is called \emph{an
automorphism} of a category.  An automorphism $\varphi$ of a
category  $C$ is called \emph{inner} (see \cite{Pl-St}) if $\varphi $ is
isomorphic to the identity functor $1_C$.

For each model $(H,\Psi,f)$ the correspondence $s\to s_\ast$
gives rise to functors
$$Cl_H:\Theta^0\to PoSet,
$$
$$
Cl_{(f)}:\widetilde \Phi\to Lat,
$$
where $PoSet$ is the category of partially ordered sets, $Lat$ is
the category of lattices. The functor $Cl_H$ assigns  a partially ordered set $C_\Theta^X(H)$ of all $H$-closed
congruences on  $W(X)$  to each
$W(X)$, while  $Cl_{(f)}$ assigns  a lattice of $H$-closed filters in $\Phi(X)$ to each
$\Phi(X)$.

Let us consider the commutative diagram
$$
\CD
\Theta^{0} @[2]> \varphi  >> \Theta^{0}\\
 @[2]/SE/ Cl_{H_1} //@.@.\;    @/SW// Cl_{H_2} /\\
 @. PoSet
 \endCD
 $$
where $\varphi$ is an automorphism of the category $\Theta^0$.
Commutativity of this diagram means that there is an isomorphism
of functors
$$
\alpha_\varphi:Cl_{H_1} \to Cl_{H_2} \cdot \varphi.
$$

This  isomorphism of functors means that the following
diagram is commutative
 $$
\CD
Cl_{H_1}(W(Y)) @> (\alpha_\varphi)_{W(Y)} >> Cl_{H_2}(\varphi(W(Y)) \\
 @V Cl_{H_1}(s) VV @VV Cl_{H_2}(\varphi(s)) V\\
 Cl_{H_1}(W(X)) @> (\alpha_\varphi)_{W(X)} >>
 Cl_{H_2}(\varphi(W(X)).
\endCD
$$
Similarly, the commutative diagram
$$
\CD \widetilde\Phi @[2]>  \varphi >> \widetilde\Phi \\
@[2]/SE/ Cl_{(f_1)} // @.@. \; @/SW//  Cl_{(f_2)} / \\
@. Lat
 \endCD
$$
gives rise to the isomorphism of functors
$$
\alpha_\varphi : Cl_{(f_1)} \to Cl_{(f_2)} \varphi.
$$


\begin{defin}[\cite{PP_D}]
Algebras $H_1$ and $H_2$ from a variety $\Theta$ are called
geometrically automorphically equivalent if for some automorphism
$\varphi$ of the category $\Theta^0$ there is the functor
isomorphism $\alpha_\varphi:Cl_{H_1} \to Cl_{H_2} \cdot \varphi$.
\end{defin}

\begin{defin}[\cite{PP_D}]
Models  $(H_1, \Psi, f_1)$ and  $(H_2, \Psi, f_2),$ where $H_1, H_2\in \Theta,$ are called logically
automorphically equivalent if for some automorphism $\varphi$ of
the category $\widetilde \Phi$ there is the functor isomorphism
$\alpha_\varphi :Cl_{(f_1)} \to Cl_{(f_2)} \cdot \varphi$.
\end{defin}

In the case of geometry over algebras, the following theorem is valid.

\begin{theorem}[\cite{PP_D}]\label{th:gqqsim}
Let $Var(H_1)=Var(H_2)=\Theta.$ If algebras  $H_1$ and
$H_2$ are geometrically automorphically equivalent then the
categories   $AG_\Theta(H_1)$ and $AG_\Theta(H_2)$ are isomorphic.
\end{theorem}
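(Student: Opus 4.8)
The plan is to deduce the isomorphism $AG_\Theta(H_1)\cong AG_\Theta(H_2)$ from the functor isomorphism $\alpha_\varphi:Cl_{H_1}\to Cl_{H_2}\cdot\varphi$ by transporting it across the anti-isomorphism $C_\Theta(H)\simeq AG_\Theta(H)$ supplied by Theorem~\ref{th:anti}. First I would recall that, since $Var(H_1)=Var(H_2)=\Theta$, Theorem~\ref{th:anti} gives anti-isomorphisms of categories $\Gamma_{H_i}:C_\Theta(H_i)\to AG_\Theta(H_i)$ for $i=1,2$, sending an $H_i$-closed congruence $T$ on $W(X)$ to the algebraic set $T^{\prime}_{H_i}$ in $Hom(W(X),H_i)$, with morphism behaviour dictated by diagram~(\ref{d:2}) (its $AG$-version). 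The objects $Cl_{H_i}(W(X))=C_\Theta^X(H_i)$ are exactly the objects of $C_\Theta(H_i)$, so the functor isomorphism $\alpha_\varphi$ already encodes a family of poset isomorphisms $(\alpha_\varphi)_{W(X)}:C_\Theta^X(H_1)\to C_\Theta^X(\varphi(W(X)))$ — here I would note that an automorphism $\varphi$ of $\Theta^0$ carries $W(X)$ to some $W(X')$ with $|X'|=|X|$, so each component is a poset isomorphism between the congruence lattices — compatible with the maps $\widehat{s}_\ast$.

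The second step is to assemble the desired isomorphism $\Psi:AG_\Theta(H_1)\to AG_\Theta(H_2)$ as the composite
$$
\Psi \;=\; \Gamma_{H_2}\circ\alpha_\varphi\circ\Gamma_{H_1}^{-1},
$$
where $\Gamma_{H_1}^{-1}$ is the (anti-)inverse of the anti-isomorphism $C_\Theta(H_1)\to AG_\Theta(H_1)$, $\alpha_\varphi$ is viewed as an isomorphism $C_\Theta(H_1)\to C_\Theta(H_2)$ (the extra automorphism $\varphi$ of $\Theta^0$ only relabels the objects $W(X)$ and is itself an isomorphism of the indexing category), and $\Gamma_{H_2}$ is the anti-isomorphism $C_\Theta(H_2)\to AG_\Theta(H_2)$. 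Composing two anti-isomorphisms with one isomorphism yields an isomorphism of categories, so $\Psi$ is an isomorphism. On objects, $\Psi$ sends an algebraic set $A\subset Hom(W(X),H_1)$ to $\bigl((\alpha_\varphi)_{W(X)}(A^{\prime}_{H_1})\bigr)^{\prime}_{H_2}$, a Galois-closed, hence algebraic, set over $H_2$.

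The main thing to check is that $\Psi$ respects morphisms, i.e. that it carries regular maps of algebraic sets to regular maps. This is where I expect the real work: a regular map $\alpha:A\to B$ of algebraic sets over $H_1$ is, by definition, induced by some $\widehat{s}_\ast$ fitting the $AG$-version of diagram~(\ref{d:2}), so on the congruence side it corresponds to the morphism $\widehat{s}_\ast:C_\Theta^Y(H_1)\to C_\Theta^X(H_1)$; I would then invoke the commuting-square condition in the definition of the functor isomorphism $\alpha_\varphi$ (the explicit square $(\alpha_\varphi)_{W(X)}\,Cl_{H_1}(s)=Cl_{H_2}(\varphi(s))\,(\alpha_\varphi)_{W(Y)}$ displayed before Theorem~\ref{th:gqqsim}) to see that the transported morphism is again of the form $\widehat{(\varphi s)}_\ast$, hence regular over $H_2$ after applying $\Gamma_{H_2}$. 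Functoriality (preservation of identities and composition) follows because all three factors $\Gamma_{H_1}^{-1}$, $\alpha_\varphi$, $\Gamma_{H_2}$ are (anti-)functors. The condition $Var(H_i)=\Theta$ is used exactly as in Theorem~\ref{th:anti}: it guarantees that a homomorphism $s:W(Y)\to W(X)$ determines $\widetilde{s}_\ast$ uniquely, so that the correspondence between regular maps and the syntactic morphisms $\widehat{s}_\ast$ is a genuine bijection and the transport above is well-defined on morphisms.
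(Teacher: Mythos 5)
The paper itself contains no proof of Theorem~\ref{th:gqqsim}: it is imported from \cite{PP_D} as a quoted result, so there is no internal argument to compare yours against. That said, your outline is the natural route and I see no genuine gap in it. You correctly isolate the two ingredients: the anti-isomorphisms $C_\Theta(H_i)\to AG_\Theta(H_i)$ supplied by Theorem~\ref{th:anti} (which is exactly where the hypothesis $Var(H_i)=\Theta$ enters), and the fact that a functor isomorphism $\alpha_\varphi:Cl_{H_1}\to Cl_{H_2}\cdot\varphi$ packages an isomorphism of the categories $C_\Theta(H_1)$ and $C_\Theta(H_2)$ --- the components give the object-level poset bijections, and the naturality squares show that conjugation by these components carries $\widehat{s}_\ast$ to $\widehat{\varphi(s)}_\ast$, which is again a legitimate morphism over $H_2$, with every morphism of $C_\Theta(H_2)$ arising this way because $\varphi$ is surjective on the morphisms of $\Theta^0$. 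Composing one isomorphism with two anti-isomorphisms indeed yields an isomorphism. The step you defer as the \emph{real work} --- that the transported morphism of algebraic sets is again regular --- is in fact immediate: the Galois anti-isomorphism matches $\widehat{t}_\ast$ with $\widetilde{t}_\ast$ by construction, so the image of a regular map induced by $s$ is the regular map induced by $\varphi(s)$. Two minor cautions: the naturality square also guarantees that the morphism assignment is well defined when distinct homomorphisms $s,t$ induce the same map $\widehat{s}_\ast=\widehat{t}_\ast$ (worth saying explicitly); and your parenthetical claim that an automorphism of $\Theta^0$ sends $W(X)$ to a free algebra of the same rank is not needed for the argument and is not automatic for an arbitrary variety, so it is better omitted.
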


A   generalization of this result for the case of
logical geometry and models is of great interest. Here is the corresponding result (for
the proof see \cite{APP1}).

\begin{theorem}\label{th:gqqsimLL}
Let $(H_1,\Psi, f_1)$  and $(H_2,\Psi, f_2)$ be logically
automorphically equivalent models such that
$Var(H_1)=Var(H_2)=\Theta.$ Then the categories $LG_\Theta(f_1)$
and $LG_\Theta(f_2)$, and the corresponding knowledge bases
$KB_1=KB(H_1,\Psi, f_1)$ and  $KB_2 =KB(H_2,\Psi, f_2)$ are
isomorphic.
\end{theorem}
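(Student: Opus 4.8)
The plan is to deduce Theorem~\ref{th:gqqsimLL} from the machinery already assembled in Section~\ref{sec:svjaz} together with Theorem~\ref{th:anti}. By hypothesis there is an automorphism $\varphi$ of the category $\widetilde\Phi$ and a functor isomorphism $\alpha_\varphi : Cl_{(f_1)}\to Cl_{(f_2)}\cdot\varphi$. Since $Cl_{(f_i)}$ takes values in lattices of $H_i$-closed filters, $\alpha_\varphi$ is precisely an isomorphism between the category $F_\Theta(f_1)$ and the category $F_\Theta(f_2)$, but now \emph{twisted} by $\varphi$: on objects it sends the lattice of $H_1$-closed filters in $\Phi(Y)$ to the lattice of $H_2$-closed filters in $\varphi(\Phi(Y))$, and on morphisms it intertwines $\widehat s_\ast$ with $\widehat{(\varphi s)}_\ast$. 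The first step is therefore to record that $\varphi$, being an automorphism of $\widetilde\Phi$, induces an automorphism $\varphi^0$ of the underlying category $\Theta^0$ of free algebras (this is standard for these multi-sorted Halmos algebras, and follows from the fact that $\widetilde\Phi$ is freely generated over $\Theta^0$ by the constants $M$); consequently $\varphi$ permutes the sorts compatibly, and $F_\Theta(f_1)$ and $F_\Theta(f_2)$ become isomorphic \emph{as categories}, with the isomorphism $\alpha=\alpha_\varphi$.

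Next I would transport this isomorphism across the Galois correspondence of Theorem~\ref{th:anti}. For $i=1,2$ we have an anti-isomorphism $Ct_{f_i}: F_\Theta(f_i)\to LG_\Theta(f_i)$ (the functor sending a closed filter $T$ to its definable set $T^L_{(f_i)}$, with inverse $A\mapsto A^L_{(f_i)}$), guaranteed by $Var(H_i)=\Theta$. Define $\beta : LG_\Theta(f_1)\to LG_\Theta(f_2)$ by $\beta = Ct_{f_2}\circ\alpha\circ Ct_{f_1}^{-1}$. On objects this reads: a definable set $A\subset Hom(W(X),H_1)$ corresponds to the $H_1$-closed filter $T=A^L_{(f_1)}$, which $\alpha$ carries to an $H_2$-closed filter $\alpha(T)$ in $\varphi(\Phi(X))$, whose definable set $\alpha(T)^L_{(f_2)}\subset Hom(W(X'),H_2)$ (with $W(X')=\varphi^0(W(X))$) is $\beta(A)$. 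Since $\alpha$ is a categorical isomorphism and both $Ct_{f_i}$ are anti-isomorphisms, the composite $\beta$ is a (covariant) isomorphism of categories; one checks that the square in Definition~\ref{d:inaut} commutes by construction, so $KB_1$ and $KB_2$ are isomorphic in the sense of that definition. This simultaneously yields the isomorphism $LG_\Theta(f_1)\cong LG_\Theta(f_2)$ asserted in the theorem.

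The step I expect to be the main obstacle is the verification that $\beta$ (equivalently, the twisted $\alpha$) is well-defined on \emph{morphisms} — that is, that it sends generalized regular maps of definable sets to generalized regular maps, compatibly with composition. Here the hypothesis $Var(H_1)=Var(H_2)=\Theta$ is essential and is exactly the point used in the proof of Theorem~\ref{th:anti}: it guarantees that each homomorphism $s:W(Y)\to W(X)$ determines its morphism $\widetilde s_\ast$ uniquely, so that the correspondence $s\mapsto\varphi^0(s)$ supplied by the automorphism $\varphi$ descends unambiguously to a correspondence $\widetilde s_\ast \mapsto \widetilde{\varphi^0(s)}_\ast$ on the geometric side. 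Concretely, given a generalized regular map $\alpha_s : A\to B$ presented (via Definition~\ref{def:gr} and diagram~(\ref{d:2})) by $s$ and $s_\ast$, one must show that the diagram~(\ref{d:2}) for the model $(f_2)$ with $\varphi^0(s)$ in place of $s$ commutes against the filters $\alpha(T_1),\alpha(T_2)$ produced by $\alpha_\varphi$; this is precisely the content of the functor-isomorphism square for $\alpha_\varphi$ drawn after the definition of geometric automorphic equivalence, combined with the commutativity of (\ref{d:2}) itself. Once this compatibility is in hand, functoriality of $\beta$ (preservation of identities and composites) follows formally from axiom~(3) of Definition~\ref{ha:ms} and the functoriality of $\varphi$, and the proof is complete.
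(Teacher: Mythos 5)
The paper itself does not prove Theorem~\ref{th:gqqsimLL}: it states the result and refers to the preprint \cite{APP1} for the proof, so there is no in-paper argument to compare yours against. Judged on its own terms, your route is the natural one and is consistent with how the paper handles the neighbouring Theorem~\ref{th:iso}: take the functor isomorphism $\alpha_\varphi:Cl_{(f_1)}\to Cl_{(f_2)}\cdot\varphi$ as an isomorphism $\alpha$ of the filter categories $F_\Theta(f_1)\to F_\Theta(f_2)$ twisted by $\varphi$, and then conjugate by the anti-isomorphisms of Theorem~\ref{th:anti} (which is where $Var(H_1)=Var(H_2)=\Theta$ enters) to obtain $\beta=Ct_{f_2}\circ\alpha\circ Ct_{f_1}^{-1}$ and the commuting square of Definition~\ref{d:inaut}.

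The one place where your argument asserts rather than proves is the very first step: that an automorphism $\varphi$ of the category $\widetilde\Phi$ induces an automorphism $\varphi^0$ of $\Theta^0$, so that objects and morphisms of $F_\Theta(f_2)$ and $LG_\Theta(f_2)$ can be re-indexed by $X'=\varphi^0(X)$ and $\varphi^0(s)$. This requires the correspondence $s\mapsto s_\ast$ to be a bijection between $\mathrm{Hom}(W(Y),W(X))$ and the morphisms $\Phi(Y)\to\Phi(X)$ of $\widetilde\Phi$; injectivity is not automatic from anything stated in the paper (equality of $s_\ast$ and $t_\ast$ on the generators $\varphi(w_1,\dots,w_m)$ of $\widetilde\Phi$ must be shown to force $s=t$), and without it $\varphi^0(s)$, hence the morphism part of $\beta$, is not well defined. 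Relatedly, Definition~\ref{def:isom} only demands that the components $(\alpha_\varphi)_{\Phi(X)}$ be bijections, so you should also note why conjugation by them carries the morphisms $\widehat s_\ast$ of $F_\Theta(f_1)$ onto morphisms $\widehat{(\varphi^0 s)}_\ast$ of $F_\Theta(f_2)$ (this is exactly what the naturality square gives, but it deserves to be said). These points are presumably the substance of the proof in \cite{APP1}; your appeal to $Var(H_i)=\Theta$ addresses the uniqueness of $\widetilde s_\ast$ given $s$ (via the duality functor $\Theta^0\to\Theta^*(H)$ and Theorem~\ref{th:anti}), but not the uniqueness of $s$ given $s_\ast$, which is the genuinely missing ingredient.
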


The following problem arises in a natural way.

\begin{problem}
Find necessary and sufficient conditions on  models $(H_1,
\Psi, f_1)$ and $(H_2, \Psi, f_2)$, which provide an isomorphism
of the corresponding knowledge bases.
\end{problem}

Theorem~\ref{th:iso} gives a sufficient condition for knowledge
bases isomorphism.

The following proposition plays an important role.

\begin{prop}[\cite{Pl-Sib}]\label{prop:eq}
Assume that for a variety $\Theta$ each automorphism of the
category  $\Theta^0$ is inner. The categories of algebraic sets
$AG_\Theta({H_1})$ and $AG_\Theta({H_2})$, where $H_1, \ H_2 \in
\Theta$,  are isomorphic if and only if the algebras $H_1$, $H_2$
are $AG$-equivalent.
\end{prop}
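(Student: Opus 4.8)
The plan is to transfer the statement to the dual side of $H$-closed congruences, to read off an automorphism of $\Theta^{0}$ from a categorical isomorphism, and then to use the hypothesis on $\Theta^{0}$ to collapse that automorphism to a literal coincidence of closed congruences. I understand ``$H_{1}$ and $H_{2}$ are $AG$-equivalent'' as the congruence analogue of Definition~\ref{def:lgeq}: $T^{''}_{H_{1}}=T^{''}_{H_{2}}$ for every $X\in\Gamma$ and every congruence $T$ on $W(X)$; equivalently, $Cl_{H_{1}}=Cl_{H_{2}}$ as functors $\Theta^{0}\to PoSet$. I also keep the standing assumption $Var(H_{1})=Var(H_{2})=\Theta$, under which Theorems~\ref{th:anti} and~\ref{th:gqqsim} are available; by Theorem~\ref{th:anti} each $AG_{\Theta}(H_{i})$ is anti-isomorphic to the category $C_{\Theta}(H_{i})$ of $H_{i}$-closed congruences, in such a way that the morphisms $\widetilde{s}_{\ast}$ correspond to the morphisms $\widehat{s}_{\ast}\colon T\mapsto(s_{\ast}T)^{''}_{H_{i}}$. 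Hence it suffices to show that $C_{\Theta}(H_{1})\cong C_{\Theta}(H_{2})$ if and only if $H_{1}$ and $H_{2}$ are $AG$-equivalent. The ``if'' direction is elementary and uses nothing about $\Theta^{0}$: if $H_{1},H_{2}$ are $AG$-equivalent then ``$H_{1}$-closed'' and ``$H_{2}$-closed'' describe the same congruences on every $W(X)$, the closure operators agree, and the partial orders and the maps $\widehat{s}_{\ast}$ coincide literally, so $C_{\Theta}(H_{1})=C_{\Theta}(H_{2})$, and dually $AG_{\Theta}(H_{1})\cong AG_{\Theta}(H_{2})$ (this is Theorem~\ref{th:gqqsim} for $\varphi=1_{\Theta^{0}}$).

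For the ``only if'' direction I would begin with an isomorphism $F\colon AG_{\Theta}(H_{1})\to AG_{\Theta}(H_{2})$, dualize it to an isomorphism $\bar{F}\colon C_{\Theta}(H_{1})\to C_{\Theta}(H_{2})$, and extract an automorphism $\varphi$ of $\Theta^{0}$. Since $Var(H_{i})=\Theta$, the morphisms of $C_{\Theta}(H_{i})$ are in bijection with the homomorphisms $s\colon W(Y)\to W(X)$ (which is what the hypothesis on $Var(H)$ guarantees in Theorem~\ref{th:anti}), so $\bar{F}$ carries each $\widehat{s}_{\ast}$ to some $\widehat{\varphi(s)}_{\ast}$; functoriality and invertibility of $\bar{F}$ then make $\varphi$ an automorphism of $\Theta^{0}$ for which $\bar{F}$ induces a functor isomorphism $\alpha_{\varphi}\colon Cl_{H_{1}}\to Cl_{H_{2}}\cdot\varphi$ --- that is, $H_{1}$ and $H_{2}$ turn out to be \emph{geometrically automorphically equivalent}. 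I expect this to be the main obstacle: it is the converse of Theorem~\ref{th:gqqsim}, and proving it means showing that all of $\Theta^{0}$, with the action of its morphisms on $H$-closed congruences, is faithfully recorded inside $C_{\Theta}(H_{i})$ --- equivalently, that $\Theta^{0}$ can be reconstructed from the category of algebraic sets through its coordinate algebras (cf.\ \cite{PP_D}, \cite{Pl-St}).

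Granting this, the hypothesis closes the argument. The automorphism $\varphi$ just produced is inner, hence isomorphic to $1_{\Theta^{0}}$, and an inner automorphism of $\Theta^{0}$ is conjugation by automorphisms $\tau_{X}\colon W(X)\to W(X)$ of the free algebras. Transport of a congruence along such a $\tau_{X}$ carries algebraic sets of $H$ to algebraic sets of $H$ --- concretely, the $\tau_{X}$-image of $A'_{H}$ is $(\widetilde{\tau_{X}}^{-1}(A))'_{H}$ --- so it preserves the property of being $H$-closed for each single algebra $H$. Feeding this into the functor isomorphism $\alpha_{\varphi}\colon Cl_{H_{1}}\to Cl_{H_{2}}\cdot\varphi$ and using the naturality of $\alpha_{\varphi}$ with respect to all substitutions forces, for every $X\in\Gamma$, the lattices $C_{\Theta}^{X}(H_{1})$ and $C_{\Theta}^{X}(H_{2})$ to coincide as sublattices of the congruence lattice of $W(X)$; thus $Cl_{H_{1}}=Cl_{H_{2}}$, which is the asserted $AG$-equivalence. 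The two steps I expect to require genuine work are this reconstruction of $\varphi$ from $\bar{F}$ and the final rigidity step, which turns ``automorphically equivalent via an inner $\varphi$'' into a true equality of closed congruences rather than a mere isomorphism of the functors $Cl_{H_{i}}$.
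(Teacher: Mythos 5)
The paper does not actually prove Proposition~\ref{prop:eq}: it is imported verbatim from \cite{Pl-Sib}, so the only fair comparison is with the argument of that source. Your overall architecture --- pass to the dual categories $C_\Theta(H_i)$ of closed congruences via Theorem~\ref{th:anti}, read off an automorphism $\varphi$ of $\Theta^0$ from the categorical isomorphism, then use innerness of $\varphi$ to collapse automorphic equivalence to genuine $AG$-equivalence --- is indeed the strategy of the cited work, and your ``if'' direction is fine. But as written this is a plan, not a proof: the two steps that carry all the mathematical content are explicitly deferred (``I expect this to be the main obstacle\dots'', ``Granting this\dots'').

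Concretely, the first gap is the reconstruction step. An abstract isomorphism $\bar F\colon C_\Theta(H_1)\to C_\Theta(H_2)$ is not given to you as something that carries each $\widehat{s}_\ast$ to some $\widehat{\varphi(s)}_\ast$; establishing that it must do so (i.e.\ that every isomorphism of these categories is ``correct''/semi-inner, so that $\Theta^0$ together with its action on closed congruences can be recovered from $AG_\Theta(H)$) is precisely the reduction theorem of \cite{Pl-Sib}, \cite{Pl-St}, and it is exactly where the hypotheses of the proposition do their work. Without it you have only proved the elementary implication. The second gap is the final rigidity claim: after absorbing an inner $\varphi$ you obtain a functor isomorphism $Cl_{H_1}\cong Cl_{H_2}$, i.e.\ natural poset bijections $(\alpha_\varphi)_{W(X)}\colon C^X_\Theta(H_1)\to C^X_\Theta(H_2)$, which is a priori strictly weaker than the asserted literal coincidence $C^X_\Theta(H_1)=C^X_\Theta(H_2)$ inside the congruence lattice of $W(X)$. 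You say naturality ``forces'' equality, but you give no mechanism; one has to pin the components down on the principal closed congruences $(\{(w,w')\})''_{H}$ using naturality over all substitutions $s$, and that argument is not sketched. A minor further point: you quietly add the standing hypothesis $Var(H_1)=Var(H_2)=\Theta$, which is not part of the statement and should either be justified (e.g.\ derived from the existence of the isomorphism) or flagged as an additional assumption needed to invoke Theorem~\ref{th:anti}.
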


For knowledge bases Proposition~\ref{prop:eq} gives necessary and
sufficient conditions for knowledge bases isomorphism when
the set of relation symbols $\Psi$ of the corresponding knowledge
base $KB(H,\Psi, f)$ contains only equality predicate symbol. The
general case is still  open problem.

\begin{problem}
Let  models  $(H_1,\Psi,f_1)$ and $(H_2,\Psi,f_2)$ be given,
$H_1, H_2 \in \Theta$. Assume that for the variety  $\Theta$ each
automorphism of the category $\widetilde \Phi$ is inner. Is it
true that $LG_\Theta({f_1})$ and $LG_\Theta({f_2})$ are isomorphic
if and only if  $H_1$ and $H_2$ are $LG$-equivalent?
\end{problem}

Of course, the necessary and sufficient conditions depend on the
variety $\Theta$. In this respect it is interesting to consider
the following problem.

\begin{problem}
What are automorphisms of the category  $\widetilde\Phi$ for
various varieties $\Theta$.
\end{problem}

Note that for applications the varieties of groups and semigroups are of  special interest. We finish our discussion with the question, which is  also important for applications.

\begin{problem}\label{con:k}
What are necessary and sufficient conditions providing an isomorphism of finite knowledge bases.
\end{problem}

\noindent{\bf Acknowledgements.} E.~Aladova was  supported  by the
Israel Science Foundation grant No.~1207/12, by the Minerva
foundation through the Emmy Noether Research Institute. The
support of these institutions is gratefully appreciated.

\end{document}